\pgfplotsset{compat=1.18}
\definecolor{CiteColor}{RGB}{0,54,150} 
\definecolor{UoGColor}{RGB}{0,54,250} 
\definecolor{unstableRed}{RGB}{255, 235, 235}
\definecolor{maybeYellow}{RGB}{255, 248, 220}
\definecolor{stableGreen}{RGB}{235, 255, 235}
\definecolor{lineRed}{RGB}{200, 50, 50}
\definecolor{lineGreen}{RGB}{50, 150, 50}
\newcommand{\fignote}[1]{\caption*{\footnotesize{\textit{Note:} #1}}}
\newtheorem{proposition}{Proposition}
\newtheorem{lemma}{Lemma}
\newtheorem{corollary}{Corollary}
\theoremstyle{definition}
\newtheorem{definition}{Definition}
\newlength{\equationwidth}
\newcommand{\scaledequation}[1]{%
  \settowidth{\equationwidth}{$\displaystyle #1$}%
  \ifdim\equationwidth>\textwidth
    \resizebox{\textwidth}{!}{$\displaystyle #1$}%
  \else
    #1%
  \fi
}
\newcommand{\str}[1]{{#1}^{*}}
\title{The R\&D Productivity Puzzle: \\Innovation Networks with Heterogeneous Firms
}
\author{
    M. Sadra Heydari\thanks{Adam Smith Business School, University of Glasgow, \url{Sadra.HeydariMostafaabadi@glasgow.ac.uk}}
        \and
    Zafer Kanik\thanks{Adam Smith Business School, University of Glasgow, \url{Zafer.Kanik@glasgow.ac.uk}}
        \and
    Santiago Montoya-Blandón\thanks{ Adam Smith Business School, University of Glasgow, \url{Santiago.Montoya-Blandon@glasgow.ac.uk}}
}
\date{
    Draft: \today
\\
\href
{https://arxiv.org/abs/2512.23337}
{[\underline{Latest Version}]}
}
\begin{document}

\maketitle
\thispagestyle{empty}

\begin{abstract}
    \begin{singlespace}
\noindent 
We introduce heterogeneous R\&D productivities into an endogenous R\&D network formation model, generalizing the framework of \cite{goyal2001r}. Heterogeneous productivities endogenously create asymmetric gains from collaboration: less productive firms benefit disproportionately from links, while more productive firms exert greater R\&D effort and incur higher costs. When productivity gaps are sufficiently large, more productive firms experience lower profits from collaborating with less productive partners. As a result, the complete network—stable under homogeneity—becomes unstable, and the positive assortative (PA) network, in which firms cluster by R\&D productivity, emerges as pairwise stable. Using simulations, we show that the clustered structure delivers higher welfare than the complete network; nevertheless, welfare under this formation follows an inverted U-shape as the fraction of high-productivity firms increases, reflecting crowding-out effects at high fractions. Altogether, we uncover an R\&D productivity puzzle: economies with higher average R\&D productivity may exhibit lower welfare through (i) the formation of alternative stable networks, or (ii) a crowding-out effect of high-productivity firms. Our findings show that productivity gaps shape the organization of innovation by altering equilibrium R\&D alliances and effort. Productivity-enhancing policies must therefore account for these endogenous responses, as they may reverse intended welfare gains.
\end{singlespace}
\bigskip
\noindent\textbf{JEL Classification:} D21, D85, O30.
\end{abstract}

\newpage

\section{Introduction}
 \label{sec:introduction}

Innovation is widely recognized as a fundamental driver of economic growth in modern societies \citep{romer1990endogenous, jones1995r}. A key engine of the innovation process is private firms’ investment in R\&D \citep{grossman1991quality, aghion1992constructive}, a substantial share of which takes place through collaborative R\&D alliances, as documented empirically by \citet{cassiman2002r, belderbos2004cooperative, calero2007research}.

In this paper, we study R\&D alliances and their implications for firms' profits and aggregate welfare in an R\&D network formation model with heterogeneous R\&D productivities. Following the canonical work by \citet{goyal2001r}, a growing theoretical literature, including \citet{goyal2003networks, konig2019r, hsieh2025endogenous}, studies R\&D networks in which firms raise the effectiveness of R\&D through collaboration while continuing to compete in product markets. A common benchmark in this literature abstracts from cross-firm differences in the ability to translate R\&D effort into cost-reducing innovations. We extend this framework by allowing firms’ innovation efficiencies to differ, consistent with theory and evidence that capabilities vary substantially across firms \citep[][]{magerman2016heterogeneous, chen2023capability, milan2024learningS, acemoglu2025macroeconomics}.\footnote{Further empirical studies show that the return on R\&D is driven by unobserved heterogeneity in innovation efficiency \citep{lentz2008empirical} and organizational competence \citep{henderson1994measuring}, rather than spending intensity alone. Indeed, the output elasticity of R\&D is highly firm-specific \citep{knott2008r} and depends on the interplay between a firm’s size and innovation strategy \citep{akcigit2018growth}.} We show that introducing heterogeneity generates a ``productivity puzzle": economies with larger overall R\&D productivity can exhibit lower welfare.

This paper contributes to three strands of the literature on R\&D networks. First, it extends models of endogenous R\&D network formation by incorporating firm-level heterogeneity in R\&D productivities. Second, this methodological approach demonstrates how heterogeneity leads to endogenous asymmetric benefits for connecting firms, which alter firms' incentives to collaborate and consequently generate a richer set of equilibrium network structures compared to the homogeneous benchmark. Third, we introduce and explain the mechanisms underlying the counterintuitive R\&D productivity puzzle: economies with higher average R\&D productivity may exhibit lower aggregate welfare through (i) the formation of denser, stable R\&D networks with lower individual efforts or (ii) a crowding-out effect of high-productivity firms. Welfare losses arise since the connectivity effect dominates the productivity effect in determining welfare.

Our model generalizes the framework introduced in \cite{goyal2001r} by allowing for heterogeneous firm capabilities to transform R\&D efforts into cost reductions. In particular, each firm exerts costly R\&D effort\footnote{Subject to decreasing marginal returns, in line with the evidence in \cite{bloom2020ideas}.}---with heterogeneous returns---to lower the marginal cost of production. Firms also form bilateral R\&D collaborations\footnote{We assume that pairwise link formation is costless, as in \cite{goyal2001r} and \cite{zirulia2012role}. Other work that considers linking costs includes \cite{goyal2003networks, galeotti2006network, westbrock2010natural, hsieh2025endogenous}.}  with each other to access the benefits of each other's R\&D efforts, while still competing à la Cournot in a homogeneous product market. This generalization to firm-level heterogeneity implies that, when firms form bilateral R\&D collaborations, the marginal benefit a firm derives from its partner’s effort depends on the partner’s productivity. That is, when a firm with relatively higher productivity increases its effort, the resulting cost reduction for its collaborators is substantial. In contrast, a lower-productivity firm contributes less to its partner’s cost reduction, even if it exerts the same level of R\&D effort. Such heterogeneity introduces asymmetries in equilibrium R\&D efforts and the firm-level gains from R\&D collaborations, thus unlocking new results on pairwise stable R\&D networks and new transmission channels of these efforts into firm profits and social welfare.

Under heterogeneity, there are two distinct effects of linking with another firm: (i) the productivity effect, and (ii) the connectivity effect---firm-level outcomes depend on both productivity differences and network externalities arising from R\&D alliances. To isolate the former effect, we first derive comparative statics on equilibrium efforts and profits for two firms with distinct productivities that occupy symmetric network positions, meaning they face the same network externalities. Our first result shows that when such two firms are connected, the low-productivity firm experiences a relatively larger increase in profit than the high-productivity firm while exerting a lower R\&D effort and therefore bearing a lower R\&D cost. Consequently, among two firms with symmetric network positions, whenever the high-productivity firm benefits from forming the link, the low-productivity firm also benefits, showing an ordering of asymmetric gains from linking in favor of the low-productivity firm. Furthermore, our simulation-based findings show that such asymmetric returns go beyond level effects, and in fact, create differences in link formation incentives: a low-productivity firm always benefits from connecting to a high-productivity firm, while the reverse does not necessarily hold. This implies there are positive/negative benefits from link formation for low-/high-productivity firms. 

Building on these fundamental insights, we study equilibrium network configurations under the widely adopted concept of pairwise stability introduced by \cite{jackson1996strategic} and employed in closely related R\&D network models, e.g., \cite{goyal2001r, goyal2003networks, zirulia2012role, billand2019firm}. Our next theoretical result shows that the complete network---which is pairwise stable in the homogeneous framework---ceases to be stable once the productivity gap between firms becomes sufficiently large. The reason is that the benefit of linking with another firm is monotonically increasing in that other firm’s R\&D productivity, which means there exists a productivity threshold for a given firm such that it does not form bilateral links with any firm that has a productivity level below that threshold.

Extending stability results to network structures beyond the complete network is known to be intractable due to network externalities, even under the homogeneous benchmark as discussed in \cite{goyal2001r}. This intractability is further compounded under heterogeneity. We therefore adopt a simulation-based approach to extract insights on the set of stable R\&D networks and their welfare implications.

In the rest of the analysis, for simplicity of exposition, we focus on the simplest form of heterogeneity by restricting our setting to two firm types: high- and low-productivity firms, with productivity levels normalized to $1$ and $\theta \in (0, 1]$, respectively. By using simulations, we then characterize the full set of pairwise stable networks for a given number of firms and type distribution, showing that, under heterogeneity, the set of stable configurations extends beyond the complete network. In particular, while the complete network continues to be stable when the productivity gap between high- and low-type firms is sufficiently small, if this gap is sufficiently large (low $\theta$), the class of Positive Assortative (PA) networks becomes stable. These networks are \textit{clustered} by productivity type, such that firms form bilateral links with all firms of the same type only. Specifically, we establish the existence of a lower threshold productivity $\theta$, denoted by $\underline{\theta}$, above which the complete network is stable, and an upper threshold, $\overline{\theta}$, below which the PA network is stable.  This implies that in cases where $\underline{\theta} < \overline{\theta}$, which holds when the fraction of high-productivity firms is not extremely large, an intermediate range for $\theta$ exists for which both networks are pairwise stable. Furthermore, the PA/complete network is the unique stable structure for sufficiently large/small productivity gaps.

The remainder of the paper shifts attention beyond firm-level outcomes to a broader implication of heterogeneity: aggregate welfare as measured by the sum of consumer and producer surplus. Simulation-based analysis uncovers the R\&D productivity puzzle: a higher average R\&D productivity in the economy ---achieved through a higher productivity of low-type firms ($\theta$) or a higher fraction of high-type firms ($\rho$)--- may exhibit lower welfare. This happens due to (i) the formation of alternative stable R\&D
network structures and/or (ii) a crowding-out effect of high-productivity firms. First, we find that the clustered structure (PA) delivers higher welfare compared to the complete network. Consequently, comparing two economies that differ in their productivity gaps, the economy with a smaller gap (in which the complete network is stable) exhibits lower welfare than the economy with a larger gap (in which the PA network is stable). Second, we find that welfare under PA formation follows an inverted U-shape in the fraction of high-productivity firms: welfare increases with this fraction at low levels, peaks at an intermediate fraction, then declines at high fractions, highlighting a crowding-out effect. 

These counterintuitive findings challenge the conventional economic wisdom that higher productivity is welfare-enhancing  \citep{romer1990endogenous, jones1995r, jones2000too, basu2022productivity}. Viewed through the lens of R\&D networks, we show that endogenous connectivity incentives can alter the welfare effects of R\&D productivity improvements. At a deeper level, our findings reveal that productivity affects both the efficiency and organization of innovation across firms. An important policy implication is that innovation-enhancing policies should account for their impact on R\&D alliances, as changes in stable network structures may offset or even reverse intended welfare gains. A natural avenue for future research is to test these mechanisms empirically, either in a cross-country setting or by examining changes in firms’ innovation productivity and collaboration patterns over time within a region. 

Our approach differs from existing contributions by allowing endogenous asymmetric benefits from link formation to arise from R\&D productivity gaps between firms. \cite{zirulia2012role} introduces exogenous, partner-specific spillovers, showing how firms with unique technological capabilities can become central hubs in their markets. Similarly, \cite{billand2019firm} studies connection-specific benefits, where the gains from linking are symmetric within a given pair but differ across types of pairings. While these models generate rich network structures, the marginal cost reduction from forming a link is symmetric for the two connected firms, regardless of underlying heterogeneity.

In contrast, our framework allows marginal cost reductions from collaboration to be endogenously asymmetric, generating fundamentally different incentives for link formation and leading to pairwise stable network structures with distinct welfare implications. Finally, whereas some related work studies the dynamic formation of R\&D networks and the evolution of technology diffusion \citep[e.g.,][]{bischi2012dynamicP1, hsieh2025endogenous}, our model focuses on a static environment, which allows us to analyze how heterogeneity shapes link formation, R\&D effort allocation, and aggregate welfare.

The remainder of the paper is as follows. Section \ref{sec:model} outlines the theoretical framework and the multi-stage game setup, providing our main theoretical findings. By simplifying to a two-type productivity setup, Section \ref{sec:simulation_results} provides the full set of pairwise stable networks for $n = 4$ firms, and extends insights on firm-level outcomes, pairwise stability and welfare through simulations. Section \ref{sec:conclusion} provides our closing remarks and discusses how our model can have implications for industrial policy.

\section{Model}
\label{sec:model}

There exists a set of firms  $\mathcal{N} \coloneqq \{1, \ldots, n\}$ competing in a single-product oligopolistic market and we employ a complete information multi-stage game setup similar to \cite{goyal2001r}. The timing of actions is as follows:

\begin{enumerate}

\item \textbf{Network Formation.} Firms engage in \textit{bilateral} R\&D alliances with one another, represented as an undirected link between two collaborators. Following \cite{jackson1996strategic}, we use the standard notion of pairwise stability as the equilibrium concept. Letting $\mathcal{G}^n$ denote the set of all possible binary $n \times n$ adjacency matrices, an R\&D network is represented by a zero-diagonal, symmetric adjacency matrix $\mathbf{G} \in \mathcal{G}^n$ in which $G_{ij}$ equals 1 if there is a collaboration between firms $i$ and $j$, and $0$ otherwise.

\item \textbf{R\&D Investment.} Given the network $\mathbf{G}$, any firm $i$ chooses their level of R\&D effort ($\Tilde{e}_i$) facing a quadratic cost $\text{C}(\Tilde{e}_i) = \phi_i \cdot \Tilde{e}_i^2$, where $\phi_i>0$ is the firm-specific cost function steepness. Firms then share their R\&D efforts with their collaborators in the network. As a result, each firm’s marginal cost of production ($c_i$) is:
\begin{equation}
    \label{eq:marginal_production_cost}
    c_i\left(\Tilde{\mathbf{e}}, \mathbf{G} \right) = \Bar{c} - \Tilde{e}_i - \sum_{j \in \mathcal{N}_i(\mathbf{G})} \Tilde{e}_j,
\end{equation}
where $\Bar{c}$ is the baseline (pre-R\&D) marginal cost common to all firms, and $\Tilde{\mathbf{e}} \coloneqq \left\{ \Tilde{e}_j \right\}_{j=1}^{n}$. We define $\mathcal{N}_i(\mathbf{G}) \coloneqq \{ j \mid G_{ij} = 1 \}$ as the set of firm $i$'s collaborators given the network $\mathbf{G}$.

\item \textbf{Product Market Competition.} Finally, firms compete in an oligopoly à la Cournot, where they produce a single homogeneous product, while facing different production costs under a linear demand curve characterized by
$
    p\left(\mathbf{q}\right) = \alpha - \sum_{i=1}^n q_i,
$
where $p$ is the price of the single product in the market, $\alpha$ represents the maximum price consumers are willing to pay, $q_i\in\mathbb{R}_+$ is firm $i$'s level of output, and we collect $\mathbf{q} \coloneqq \left\{ q_j \right\}_{j=1}^{n} $.
\end{enumerate}

In this three-stage game, firms maximize their profits $\pi_i$ by sequentially choosing collaboration links, then R\&D effort ($\Tilde{e}_i$), and then quantity ($q_i$), where:
\begin{equation}
    \label{eq:profits}
    \pi_i\left(\mathbf{q}, \Tilde{\mathbf{e}}, \mathbf{G} \right) = q_i \cdot \left[p\left(\mathbf{q} \right) - c_i\left(\Tilde{\mathbf{e}}, \mathbf{G} \right) \right] - \phi_i \cdot \Tilde{e}_i^2.
\end{equation}
We depart from the literature by adding heterogeneity in R\&D productivities into the baseline model described above as introduced by \cite{goyal2001r}. Specifically, while earlier studies assume that $\phi_i = \phi$ for all firms, in our framework $\phi_i$ captures the inverse of firm $i$’s R\&D productivity. A higher value of $\phi_i$ implies the firm must incur a greater R\&D cost to achieve a given efficiency (or technology) level as measured by its marginal production cost $c_i$. 

For ease of interpretation, we normalize R\&D costs relative to the most R\&D-efficient firm in the market. Let $\phi$ denote the cost coefficient of this most productive firm, defined as
$
	\phi \coloneqq \min\{\phi_1, \ldots, \phi_n\}.
$
We then define the \textit{relative R\&D productivity} of firm $i$ with respect to the most productive firm as $\theta_i \coloneqq \sqrt{\phi / \phi_i}$ and a rescaled \textit{R\&D effort} variable as $e_i \coloneqq \Tilde{e}_i / \theta_i$. Given this normalization, $\theta_i \in(0, 1]$ and the most productive firm(s) have $\max_{i \in \mathcal{N}} \theta_i = 1$.

These definitions enable us to rewrite the marginal cost function \eqref{eq:marginal_production_cost} and firm profits \eqref{eq:profits}:
\begin{align}
    \begin{aligned}
        \label{eq:marginal_cost}
        c_i\left(\mathbf{e}, \mathbf{G} \right) & = \bar{c} - \theta_i  e_i - \sum_{j\in \mathcal{N}_i(\mathbf{G})} \theta_j  e_j,  \\
        \pi_i\left(\mathbf{q}, \mathbf{e}, \mathbf{G} \right) & = q_i \cdot \left[p\left(\mathbf{q} \right) - c_i\left(\mathbf{e}, \mathbf{G} \right) \right] - \phi \cdot e_i^2,
    \end{aligned}
\end{align}
where $\mathbf{e} \coloneqq \{e_j\}_{j=1}^n$ is the vector of R\&D efforts. Henceforth, we refer to $e_i$ as firm $i$’s R\&D effort and to $\theta_i$ as its R\&D productivity. A higher $\theta_i$ corresponds to greater cost reductions, equivalent to a lower $\phi_i$ coefficient.

\subsection{Equilibrium Characterization}

We solve for the equilibrium production level, R\&D efforts, and collaboration links by performing backwards induction. First, we solve for the firm's maximization problem and obtain equilibrium production levels and R\&D efforts, corresponding to the last two stages of the game. To this end, we provide Proposition~\ref{prop:equilibrium_existance} that guarantees the existence of a solution with strictly positive efforts and outputs. 

These last two stages can be interpreted as the solution to a game with an exogenously given R\&D network \citep[a similar structure is analyzed in][under the assumptions of homogeneous productivity and multiple markets]{konig2019r}. Finally, we solve for the equilibrium network structure invoking the concept of pairwise stability \citep{jackson1996strategic}, characterizing the endogenous formation of R\&D alliances for heterogeneous firms.

\subsubsection{Market Competition}
In the final (third) stage of the game, firms choose their production quantities conditional on their realized marginal costs $c_i$, taking the market demand function as given. The equilibrium production quantities and corresponding profits are given by\footnote{The derivations are provided in the \href{https://drive.google.com/file/d/1nduNFpbgYaX7XRpILcPp3MSKFv4mXkue/view?usp=sharing}{Supplementary Appendix}.}
\begin{align}
    \label{eq:production_amount_eq}
    \begin{aligned}
        \str{q}_i\left(\mathbf{e}, \mathbf{G} \right) & = \frac{1}{n+1} \left[ \alpha - n \cdot c_i\left(\mathbf{e}, \mathbf{G} \right) + \sum_{j \neq i} c_j\left(\mathbf{e}, \mathbf{G} \right) \right], \\
        \str{\pi}_i\left(\mathbf{e}, \mathbf{G} \right) & = \str{q}_i\left(\mathbf{e}, \mathbf{G} \right)^2 - \phi \cdot e_i^2. 
    \end{aligned}
\end{align}
Let $d_i(\mathbf{G}) \coloneqq |\mathcal{N}_i(\mathbf{G})| = \sum_{j\in\mathcal{N}} G_{ij}$ denote the degree of firm $i$ in network $\mathbf{G}$. Substituting equation~\eqref{eq:marginal_cost} into \eqref{eq:production_amount_eq}, the equilibrium output level of firm $i$ can be rewritten as a function of the R\&D collaboration network and effort profile:
\small
\begin{equation}
\label{eq:quantity_with_desc_n_nn}
    \str{q}_i \left(\mathbf{e}, \mathbf{G}\right)
	= 
	\underbrace{\frac{\alpha-\bar{c}}{n+1}}_{\text{baseline}}
	+ \underbrace{\frac{n - d_i(\mathbf{G})}{n+1}\theta_i e_i}_{\text{own effort effect}}
	+ \underbrace{\sum_{j\in \mathcal{N}_i(\mathbf{G})} \frac{n - d_j(\mathbf{G})}{n+1}\theta_j e_j}_{\text{neighbors' effort effect}}
	- \underbrace{\sum_{k\in \mathcal{N}_{-i}(\mathbf{G})}\frac{1 + d_k(\mathbf{G})}{n+1}\theta_k e_k}_{\text{non-neighbors' effort effect}},
\end{equation}
\normalsize
where $\mathcal{N}_{-i}(\mathbf{G}) \coloneqq \left\{j\in\mathcal{N} \mid G_{ij} = 0, j \neq i \right\}$ denotes the set of firms that are not neighbors of firm $i$ in the network $\mathbf{G}$. Equation~\eqref{eq:quantity_with_desc_n_nn} shows that a firm’s optimal production level increases with its own R\&D effort and decreases with the R\&D efforts of its non-neighbors. The effect of R\&D efforts by firm $i$'s neighbors on its production is positive given that the degree of each firm $j \in \mathcal{N}$ always satisfies $0 \leq d_j(\mathbf{G}) \leq n-1 < n$, while the effect from non-neighbours is therefore always negative.

\subsubsection{Equilibrium R\&D Efforts}

In the second stage, we solve for each firm's optimal R\&D effort $e_i$, conditional on the collaboration network $\mathbf{G}$. Using \eqref{eq:production_amount_eq}, the first order condition (FOC) for profit maximization with respect to $e_i$ is:
\begin{equation*}
	\left. \frac{ \partial \str{\pi}_i\left(\mathbf{e}, \mathbf{G}\right) }{ \partial e_i } \right|_{\str{e}_i} =  \quad \left(\left. \frac{2 \str{q}_i(\mathbf{e}, \mathbf{G})}{n+1} \right|_{\str{e}_i} \right) \cdot \left( -n \left.\frac{ \partial c_i(\mathbf{e}, \mathbf{G}) }{ \partial e_i }\right|_{\str{e_i}} + \sum_{j \ne i} \left. \frac{ \partial c_j(\mathbf{e}, \mathbf{G}) }{ \partial e_i } \right|_{\str{e_i}} \right ) - 2 \phi  \str{e}_i = 0,
\end{equation*}
where $\str{e}_i\left(\mathbf{e}_{-i}, \mathbf{G}\right) \coloneqq \arg\max_{e_i}\{\str{\pi}_i\}$ is firm $i$’s best-response effort level, and $\mathbf{e}_{-i} \coloneqq \{e_j\}_{j\ne i}$ is the effort of all other firms. Let $\eta_i\left(\mathbf{G}\right)\coloneqq [n- d_i\left(\mathbf{G}\right)] / (n+1)$ represent a strictly positive \textit{sparsity} coefficient that decreases with the firm's degree $d_i(\mathbf{G})$. From equation \eqref{eq:marginal_cost}, we know that 
$\partial c_i(\mathbf{e}, \mathbf{G})/\partial e_i = -\theta_i$ and $\partial c_j(\mathbf{e}, \mathbf{G})/\partial e_i = -\theta_i G_{ij}$, such that we can solve the FOC in order to obtain:
\small
\begin{equation}
	\label{eq:foc_effort_desc}
	\str{e}_i \left(\mathbf{G}, \mathbf{e}_{-i}\right) = 
	\underbrace{\frac{\theta_i\eta_i(\mathbf{G})}{\phi-\theta_i^2\eta_i^2(\mathbf{G})}}_{\text{\scriptsize (i) sparsity}}
	\Biggl\{
	\underbrace{\frac{\alpha - \bar{c}}{n+1}}_{\text{\scriptsize (ii) baseline}}
	+  \underbrace{\sum_{j\in\mathcal{N}_i(\mathbf{G})} \eta_j(\mathbf{G})  \theta_j e_j}_{\text{\scriptsize (iii) neighbor effect}}
	-  \underbrace{\sum_{k\in \mathcal{N}_{-i}(\mathbf{G})}\big[1 - \eta_k(\mathbf{G})\big] \theta_k  e_k}_{\text{\scriptsize (iv) non-neighbor effect}}
	\Biggr\},
\end{equation}
\normalsize
The best-response effort of firms can be decomposed into four key factors. The first is a proportionality constant that is increasing in the firm's productivity $\theta_i$ and its sparsity coefficient $\eta_i(\mathbf{G})$, and therefore decreasing in its degree $d_i(\mathbf{G})$. The second is a baseline effort level that depends on market parameters $\alpha$, $\Bar{c}$, and $n$. The structures of the neighbor effect (iii) and non-neighbor effects (iv) mirror that in equation~\eqref{eq:quantity_with_desc_n_nn}, indicating that a firm's optimal R\&D effort is increasing in neighbors' efforts and productivities, and decreasing in non-neighbors' efforts and productivities as well as the degree of all competitor firms.

Letting $\mathbf{1}_n$ denote an $n \times 1$ vector of ones, the FOCs in \eqref{eq:foc_effort_desc} yields a system of $n$ linear equations in $n$ unknowns:
\begin{equation}
	\label{eq:optimal_effort_matrix}
	\mathbf{A}(\mathbf{G})  \mathbf{e} = (\alpha - \bar{c}) \mathbf{1}_n, \quad \text{where }
	A_{ij}(\mathbf{G})  =  \begin{cases}
		\frac{(n+1)^2\phi}{\theta_i[n - d_i(\mathbf{G})]} - \theta_i[n - d_i(\mathbf{G})] & \text{for } i = j,\\
		\left[1 + d_j(\mathbf{G})\right]\theta_j - (n+1)G_{ij}\theta_j & \text{for } j \neq i.
	\end{cases}
\end{equation}
\begin{proposition}[Cost parameter bound]
	\label{prop:equilibrium_existance}
	There exists a unique optimal effort profile $\str{\mathbf{e}}(\mathbf{G}) \coloneqq \{\str{e}_j(\mathbf{G})\}_{j=1}^{n}$ with $\str{e}_i(\mathbf{G}) > 0$ for all $i \in \mathcal{N}$ that solves equation~\eqref{eq:optimal_effort_matrix} if the cost parameter $\phi$ satisfies:
	\begin{equation}
		\label{eq:phi_condition}
		\phi > \frac{n[2(n-1)^2 + n]}{(n+1)^2} \eqqcolon \underline{\phi}.
	\end{equation}
\end{proposition}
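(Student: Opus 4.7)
The plan is to split the proposition into two logically separate claims: invertibility of the coefficient matrix $\mathbf{A}(\mathbf{G})$, which delivers existence and uniqueness of a solution to the linear system in \eqref{eq:optimal_effort_matrix}, and strict positivity of that solution. The condition $\phi > \underline{\phi}$ will play a distinct role in each step.

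For invertibility, I would establish strict row diagonal dominance of $\mathbf{A}(\mathbf{G})$. Rewriting the diagonal as $A_{ii} = (n+1)\phi/[\theta_i\eta_i(\mathbf{G})] - (n+1)\theta_i\eta_i(\mathbf{G})$ and noting that $\theta_i\eta_i(\mathbf{G}) \leq n/(n+1)$ (attained at $\theta_i = 1$ and $d_i(\mathbf{G}) = 0$), one obtains the uniform lower bound $A_{ii} \geq (n+1)^2\phi/n - n$. For the off-diagonal absolute row sum, I would use the decomposition
$$\sum_{j\neq i}|A_{ij}| = \sum_{j\in\mathcal{N}_i(\mathbf{G})}(n - d_j)\theta_j + \sum_{k\in\mathcal{N}_{-i}(\mathbf{G})}(1 + d_k)\theta_k,$$
and bound each piece separately by $(n-1)^2$: the first via $d_j \geq 1$ for $j\in\mathcal{N}_i(\mathbf{G})$ together with $|\mathcal{N}_i(\mathbf{G})|\leq n-1$, and the second via $d_k \leq n-2$ for $k\in\mathcal{N}_{-i}(\mathbf{G})$ together with $|\mathcal{N}_{-i}(\mathbf{G})|\leq n-1$. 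Summing the two bounds and imposing $A_{ii} > 2(n-1)^2$ rearranges to exactly $\phi > n[2(n-1)^2 + n]/(n+1)^2 = \underline{\phi}$, which makes $\mathbf{A}(\mathbf{G})$ nonsingular and yields the unique solution $\str{\mathbf{e}}(\mathbf{G}) = \mathbf{A}(\mathbf{G})^{-1}(\alpha - \bar{c})\mathbf{1}_n$.

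For strict positivity, I would leverage the pre-substitution form of the first-order condition, $\phi\str{e}_i = \theta_i\eta_i(\mathbf{G})\str{q}_i$, which implies $\str{e}_i$ and $\str{q}_i$ carry the same sign for every $i$. A continuity-in-$\phi$ argument then closes the step: as $\phi \to \infty$, equilibrium efforts vanish and the system collapses to standard Cournot with uniform baseline cost $\bar{c}$, giving $\str{q}_i = (\alpha - \bar{c})/(n+1) > 0$ for all $i$. Since $\mathbf{A}(\mathbf{G})$ remains invertible throughout $\phi > \underline{\phi}$, the solution $\str{\mathbf{e}}(\phi)$ depends continuously on $\phi$ on this range; any loss of positivity would require $\str{e}_i(\phi^\ast) = 0$ for some $\phi^\ast > \underline{\phi}$, which by the FOC forces $\str{q}_i(\phi^\ast) = 0$, and one then rules out this boundary crossing using that the equilibrium quantities in the Cournot stage remain bounded away from zero whenever the effective cost asymmetries are controlled by the same diagonal-dominance margin.

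The main obstacle, and the reason $\underline{\phi}$ is not simply the Cournot-positivity threshold, is the mixed-sign structure of $\mathbf{A}(\mathbf{G})$: off-diagonals are negative for neighbor pairs (collaboration amplifies cost reductions) and positive for non-neighbor pairs (product-market competition penalizes rivals' independent cost reductions). This precludes a direct $M$-matrix argument that would yield positivity of $\mathbf{A}(\mathbf{G})^{-1}(\alpha - \bar{c})\mathbf{1}_n$ as an automatic consequence of invertibility. The proof therefore cleanly splits the two roles: the magnitude of $\underline{\phi}$ is dictated by the diagonal-dominance bound in the worst-case network and productivity configuration, while positivity is handled separately through the FOC identity and a continuity argument in $\phi$.
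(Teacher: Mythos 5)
Your invertibility step is fine: bounding each of the two off-diagonal pieces by $(n-1)^2$ separately (rather than using $|\mathcal{N}_i(\mathbf{G})| + |\mathcal{N}_{-i}(\mathbf{G})| = n-1$ to get the tighter single bound $(n-1)^2$ for the whole sum) gives the dominance condition $A_{ii} > 2(n-1)^2$, which does rearrange to $\phi > \underline{\phi}$. Note, however, that this makes the appearance of $\underline{\phi}$ at this stage something of an accident: with the tight off-diagonal bound, diagonal dominance alone only requires the weaker threshold $n[(n-1)^2+n]/(n+1)^2$, and the extra factor of $2$ in $\underline{\phi}$ is actually consumed by the positivity argument, not by nonsingularity. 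The paper makes this split explicit: it first proves dominance under the weaker bound, then writes $\mathbf{A}(\mathbf{G}) = (\mathbf{I}_n + \mathbf{E}\mathbf{D}^{-1})\mathbf{D}$ and uses a Neumann series to show that the row sums of $(\mathbf{I}_n + \mathbf{E}\mathbf{D}^{-1})^{-1}\mathbf{1}_n$ equal $1 + \varepsilon_i$ with $\|\bm{\varepsilon}\|_{\infty} \le s^*/(d^* - s^*) < 1$; it is the requirement $d^* > 2s^*$ that produces $\underline{\phi}$.

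The genuine gap is in your positivity step. The identity $\phi\,\str{e}_i = \theta_i\eta_i(\mathbf{G})\,\str{q}_i$ and the limit $\phi \to \infty$ are both correct, and the intermediate value theorem does reduce the problem to ruling out $\str{q}_i(\phi^\ast) = 0$ at some $\phi^\ast > \underline{\phi}$. But your final sentence --- that quantities ``remain bounded away from zero whenever the effective cost asymmetries are controlled by the same diagonal-dominance margin'' --- is an assertion, not an argument, and it is exactly the hard part. At a putative first crossing, $\str{e}_i(\phi^\ast) = 0$ with all other efforts nonnegative, and equation~\eqref{eq:quantity_with_desc_n_nn} then gives $\str{q}_i \ge (\alpha - \bar{c})/(n+1) - \sum_{k \in \mathcal{N}_{-i}(\mathbf{G})} [(1+d_k(\mathbf{G}))/(n+1)]\,\theta_k \str{e}_k$; this can vanish if non-neighbours' efforts are large, so you need a quantitative \emph{upper} bound on equilibrium efforts that is uniform over $\phi > \underline{\phi}$. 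Producing such a bound on the solution of the mixed-sign linear system is essentially the content of the paper's Neumann-series estimate, so your route does not avoid that work --- it only relocates it to an unproved claim. To close the argument you would need to either carry out that bound explicitly (at which point the continuity detour becomes unnecessary) or adopt the paper's direct computation of the row sums of $\mathbf{A}(\mathbf{G})^{-1}\mathbf{1}_n$.
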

\vspace{-6mm}
Proposition~\ref{prop:equilibrium_existance} establishes that a unique and interior equilibrium exists for any network $\mathbf{G} \in \mathcal{G}^n$, provided the cost coefficient $\phi$ exceeds a threshold $\underline{\phi} > 2(n - 1)$.
The proof of Proposition~\ref{prop:equilibrium_existance} relies on both the Levy-Desplanques theorem \citep[itself a consequence of Gershgorin's disc theorem,][]{gershgorin1931uber} and matrix power series \citep[e.g., Chapter 5 of][]{Horn_Johnson_2012}. Specifically, we provide a bound for $\phi$ that guarantees the matrix $\mathbf{A}(\mathbf{G})$ is strictly diagonally dominant (which guarantees its non-singularity) and its inverse has positive row sums regardless of the network $\mathbf{G}$. Both facts together imply the FOC system \eqref{eq:optimal_effort_matrix} admits a unique solution with strictly positive R\&D efforts.

While existence and positivity of a unique equilibrium effort profile are guaranteed by Proposition~\ref{prop:equilibrium_existance}, equilibrium effort \emph{levels} cannot be generally expressed in closed form, as they involve an intractable inverse of the non-symmetric matrix $\mathbf{A}(\mathbf{G})$. Nevertheless, we can derive equilibrium profits using equilibrium efforts as:
\begin{equation}
    \label{eq:eq_profit_effort}
    \str{\pi}_i \left(\mathbf{G}\right) = \left[\frac{\phi}{\theta_i^2\eta_i^2(\mathbf{G})} - 1\right]\phi \cdot {\str{e}_i}^2\left(\mathbf{G}\right).
\end{equation}
To uncover the black box of equilibrium efforts and profits under heterogeneous productivities, we characterize relative equilibrium efforts and profits of any two firms $i$ and $j$ that are symmetric with respect to their position in the network. This is formalized in the following definition:
\begin{definition}[Symmetric position]
    Firms $i, j\in\mathcal{N}$ have a \emph{symmetric position} in a given network $\mathbf{G}$ if $\forall k\in\mathcal{N}\ne i, j$, $G_{ik} = G_{jk}$. 
    \label{definition_symmetry}
\end{definition}
\vspace{-4mm}
For two firms $i$ and $j$ that have a symmetric position, they hold identical links in the network $\mathbf{G}$. This directly implies their degrees are also the same, $d_i(\mathbf{G}) = d_j(\mathbf{G})$, and so are their sparsity coefficients $ \eta_i(\mathbf{G}) = [n - d_i(\mathbf{G})] / (n+1) =[n - d_j(\mathbf{G})] / (n+1)  = \eta_j(\mathbf{G})$.

\begin{proposition}[Effort and profit ratios under symmetric positions]
    \label{prop:summetric_ratio}
    Let firms $i, j\in\mathcal{N}$ have a symmetric position in a given network $\mathbf{G}$. Then, the ratio of their equilibrium efforts and profits are:
    \begin{equation}
        \label{eq:ratio_effort_profit_prop2}
        \begin{aligned}
        \frac{\str{e}_i(\mathbf{G})}{\str{e}_j(\mathbf{G})} & = \frac{\theta_i}{\theta_j}  \frac{\phi - \theta_j^2\eta_j(\mathbf{G})(1-G_{ij})}{\phi - \theta_i^2\eta_i(\mathbf{G})(1-G_{ji})},
        \\
        \frac{\str{\pi}_i(\mathbf{G})}{\str{\pi}_j(\mathbf{G})} & = \frac{\phi - \theta_i^2 \eta_i^2(\mathbf{G})}{\phi - \theta_j^2\eta_j^2(\mathbf{G})} \left[ 
        \frac{\phi - \theta_j^2\eta_j(\mathbf{G})(1-G_{ij})}{\phi - \theta_i^2\eta_i(\mathbf{G})(1-G_{ji})}
        \right]^2.
         \end{aligned}
    \end{equation}
\end{proposition}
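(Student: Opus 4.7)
The plan is to use the symmetric-position hypothesis to eliminate the common network-externality terms from each firm's first-order condition, reducing the problem to a single linear relation between $\str{e}_i(\mathbf{G})$ and $\str{e}_j(\mathbf{G})$. First, I would rewrite the FOC in \eqref{eq:foc_effort_desc} by merging the neighbor and non-neighbor sums into a single indicator-weighted sum,
\begin{equation*}
\frac{\phi - \theta_i^2 \eta_i^2(\mathbf{G})}{\theta_i\, \eta_i(\mathbf{G})}\, \str{e}_i(\mathbf{G}) = \frac{\alpha - \bar{c}}{n+1} + S_i(\mathbf{G}), \qquad S_i(\mathbf{G}) \coloneqq \sum_{k \ne i}\big[G_{ik} - (1-\eta_k(\mathbf{G}))\big]\, \theta_k\, \str{e}_k(\mathbf{G}).
\end{equation*}
This weight reproduces $\eta_k$ when $k$ is a neighbor of $i$ (since $G_{ik}=1$) and $-(1-\eta_k)$ when $k$ is a non-neighbor (since $G_{ik}=0$), so the reformulation is algebraically equivalent to \eqref{eq:foc_effort_desc}.

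Next, under a symmetric position, $\eta_i(\mathbf{G}) = \eta_j(\mathbf{G}) \eqqcolon \eta$ and $G_{ik} = G_{jk}$ for every $k \notin \{i,j\}$, so all such indices contribute identically to $S_i$ and $S_j$ and cancel in the difference. Only the cross-terms $k=j$ (in $S_i$) and $k=i$ (in $S_j$) survive, yielding
\begin{equation*}
S_i(\mathbf{G}) - S_j(\mathbf{G}) = \big[G_{ij} - (1-\eta)\big]\big(\theta_j\, \str{e}_j(\mathbf{G}) - \theta_i\, \str{e}_i(\mathbf{G})\big).
\end{equation*}
Subtracting the two scaled FOCs, multiplying through by $\eta$, and collecting the $\str{e}_i$ and $\str{e}_j$ coefficients produces a single equation
\begin{equation*}
\frac{\phi - \theta_i^2\big[\eta^2 - \eta(G_{ij} - (1-\eta))\big]}{\theta_i}\, \str{e}_i(\mathbf{G}) = \frac{\phi - \theta_j^2\big[\eta^2 - \eta(G_{ij} - (1-\eta))\big]}{\theta_j}\, \str{e}_j(\mathbf{G}).
\end{equation*}

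Then, since $G_{ij}\in\{0,1\}$, the identity $\eta^2 - \eta\big[G_{ij} - (1-\eta)\big] = \eta(1-G_{ij})$ collapses the bracket and immediately gives the claimed effort ratio (which correctly reduces to $\theta_i/\theta_j$ when $G_{ij}=1$). The profit ratio then follows by plugging the effort ratio into the equilibrium-profit identity \eqref{eq:eq_profit_effort}, $\str{\pi}_i(\mathbf{G}) = \big[\phi/(\theta_i^2 \eta_i^2(\mathbf{G})) - 1\big]\phi \cdot \str{e}_i^2(\mathbf{G})$, using $\eta_i=\eta_j=\eta$ to simplify the prefactor into $(\phi - \theta_i^2\eta^2)/(\phi - \theta_j^2\eta^2)$, and squaring the effort ratio.

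The main obstacle is the bookkeeping in step two: verifying that the unified indicator weight $[G_{ik} - (1-\eta_k)]$ truly captures both terms of \eqref{eq:foc_effort_desc} with the correct signs, and then tracking exactly which indices survive in $S_i - S_j$ under symmetric position. Once this accounting is in hand, the remaining manipulations are routine, and the key simplification is the collapse $\eta^2 - \eta[G_{ij}-(1-\eta)] = \eta(1-G_{ij})$, which unifies the two $G_{ij}$ cases into the single closed-form ratio stated in the proposition.
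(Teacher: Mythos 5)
Your proposal is correct and follows essentially the same route as the paper's proof: both exploit the symmetric-position hypothesis to cancel the common externality terms in the two first-order conditions (the paper by isolating the cross-term on the left-hand side and equating the identical right-hand sides, you by subtracting the FOCs and noting only the cross-terms survive in $S_i - S_j$), arriving at the same linear relation via the collapse to $\eta(1-G_{ij})$, and then substituting into the profit identity \eqref{eq:eq_profit_effort}. The only cosmetic difference is your unified indicator weight $[G_{ik}-(1-\eta_k)]$ versus the paper's explicit split over $\mathcal{N}_i(\mathbf{G})\setminus\{j\}$ and $\mathcal{N}_{-i}(\mathbf{G})\setminus\{j\}$, which is an equivalent bookkeeping device.
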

\vspace{-6mm}
Proposition \ref{prop:summetric_ratio} provides an expression for the ratio of optimal efforts and profits for symmetrically positioned firms $i$ and $j$. It shows that if $\theta_i>\theta_j$, then $\str{e}_i(\mathbf{G})>\str{e}_j(\mathbf{G})$ always holds, regardless of the connection between $i$ and $j$. That is, if firm $i$ is more productive than firm $j$ while having the same network position, then firm $i$ commits higher R\&D effort than $j$.
In particular, let $\mathbf{G}_{+ij}$ denote the original network $\mathbf{G}$ where the link between firms $i$ and $j$ is present (i.e., $G_{ij} = G_{ji} = 1$), with all other links remaining unchanged; and let $\mathbf{G}_{-ij}$ denote the network where this link is not present (i.e., $G_{ij} = G_{ji} = 0$). Then, the relative effort ratio given in Proposition~\ref{prop:summetric_ratio} implies:
\begin{equation*}
    1 < \frac{\theta_i}{\theta_j} = \frac{\str{e}_i(\mathbf{G}_{+ij})}{\str{e}_j(\mathbf{G}_{+ij})} < \frac{\str{e}_i(\mathbf{G}_{-ij})}{\str{e}_j(\mathbf{G}_{-ij})},
\end{equation*}
meaning that when the link between $i$ and $j$ is removed, the relative effort of the higher productive firm $i$ compared to the lower productive firm $j$ rises.

The ranking of equilibrium profits based on their productivities, however, does depend on whether these two firms form a collaboration link or not. 
To better understand the incentives for forming a link between two firms with different productivity levels\textemdash under the assumption that they occupy symmetric positions in the network\textemdash we next compare their relative profits with and without such a link.
\begin{corollary}
    \label{corrolary_profit}
    Let $i, j\in \mathcal{N}$ have a symmetric position in $\mathbf{G}$, and $\theta_i > \theta_j$.
    Then:
    \begin{enumerate}
        \item[(i)] $\str{\pi}_i(\mathbf{G}_{-ij}) > \str{\pi}_j(\mathbf{G}_{-ij})$, that is, when $i$ and $j$ are not connected, the firm with higher productivity has higher profit.
        \item[(ii)] $\str{\pi}_i(\mathbf{G}_{+ij}) < \str{\pi}_j(\mathbf{G}_{+ij})$, that is, when $i$ and $j$ are connected, the firm with higher productivity has lower profit.
    \end{enumerate}
\end{corollary}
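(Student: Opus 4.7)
The plan is to derive both claims directly from the profit-ratio identity in Proposition~\ref{prop:summetric_ratio}, specialized to the two link configurations of interest. Since firms $i$ and $j$ are symmetrically positioned, their sparsity coefficients coincide within each network; write $\eta^+$ and $\eta^-$ for this common value in $\mathbf{G}_{+ij}$ and $\mathbf{G}_{-ij}$, respectively. Each assertion then reduces to a scalar inequality about the right-hand side of \eqref{eq:ratio_effort_profit_prop2}.

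Part~(ii) is essentially immediate. Setting $G_{ij} = G_{ji} = 1$ collapses the bracketed factor in \eqref{eq:ratio_effort_profit_prop2} to unity, leaving
\begin{equation*}
\frac{\str{\pi}_i(\mathbf{G}_{+ij})}{\str{\pi}_j(\mathbf{G}_{+ij})} = \frac{\phi - \theta_i^2 (\eta^+)^2}{\phi - \theta_j^2 (\eta^+)^2}.
\end{equation*}
Both numerator and denominator are strictly positive, since positive equilibrium profits via \eqref{eq:eq_profit_effort} require $\phi > \theta_i^2 \eta_i^2(\mathbf{G})$; because $\theta_i > \theta_j$, the numerator is the smaller of the two, so the ratio lies in $(0,1)$.

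Part~(i) is the substantive case. Setting $G_{ij} = G_{ji} = 0$ yields
\begin{equation*}
\frac{\str{\pi}_i(\mathbf{G}_{-ij})}{\str{\pi}_j(\mathbf{G}_{-ij})} = \frac{\phi - \theta_i^2 (\eta^-)^2}{\phi - \theta_j^2 (\eta^-)^2}\left[\frac{\phi - \theta_j^2 \eta^-}{\phi - \theta_i^2 \eta^-}\right]^2,
\end{equation*}
and the task is to show this exceeds one. I would introduce the auxiliary function $h(t) \coloneqq (\phi - t(\eta^-)^2)/(\phi - t\eta^-)^2$ on $t \in (0, 1]$, so that the right-hand side equals $h(\theta_i^2)/h(\theta_j^2)$, and then establish strict monotonicity of $h$. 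A routine differentiation shows $h'(t)$ has the same sign as the bracket $(2 - \eta^-)\phi - t(\eta^-)^2$, so everything reduces to verifying this bracket is positive throughout $t \in (0,1]$.

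Verifying this positivity is where the cost-parameter bound from Proposition~\ref{prop:equilibrium_existance} does the real work, and it is the only step requiring a careful estimate. Because $\eta^- \leq n/(n+1) < 1$ we have $2 - \eta^- > 1$, and a short calculation shows $\underline{\phi} > n/(n+1) \geq \eta^-$ for all $n \geq 2$, so $(2 - \eta^-)\phi > \phi > \eta^- \geq t(\eta^-)^2$. This delivers $h'(t) > 0$ (and as a byproduct, positivity of the denominators in the ratio); together with $\theta_i^2 > \theta_j^2$ it gives $h(\theta_i^2) > h(\theta_j^2)$, hence $\str{\pi}_i(\mathbf{G}_{-ij}) > \str{\pi}_j(\mathbf{G}_{-ij})$, completing~(i).
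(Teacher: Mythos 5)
Your proof is correct and follows essentially the same route as the paper's: both parts are read off the profit-ratio identity of Proposition~\ref{prop:summetric_ratio}, with part (ii) immediate and part (i) reduced to the monotonicity of an auxiliary rational function of $\theta^2$. The only (cosmetic) difference is that you package the whole ratio as $h(\theta_i^2)/h(\theta_j^2)$ with $h(t) = [\phi - t(\eta^-)^2]/(\phi - t\eta^-)^2$, whereas the paper factors out an extra term exceeding one and shows monotonicity of $f(t) = [\phi - t(\eta^-)^2]/(\phi - t\eta^-)$; your sign computation for $h'$ and the bound $(2-\eta^-)\phi > t(\eta^-)^2$ both check out.
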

\vspace{-4mm}
Defining the link deviation operator $\Delta^{ij}$ on any function $f: \mathcal{G}^n\to\mathbb{R}$ as $
\Delta^{ij}f(\mathbf{G}) \coloneqq f (\mathbf{G}_{+ij}) - f(\mathbf{G}_{-ij})$, Corollary \ref{corrolary_profit} also implies that if $\Delta^{ij}\str{\pi}_i(\mathbf{G}) >0$ holds, then $\Delta^{ij}\str{\pi}_j(\mathbf{G}) > 0$ also holds. This means that if adding the link between $i$ and $j$ is beneficial (profit-increasing) for the firm with higher productivity, then it is always beneficial for the firm with lower productivity.
However, whether each such firm benefits from forming a link is still ambiguous, and this problem remains intractable even under stronger assumptions on the network structure (e.g., complete or empty network) than the symmetric position assumption we use. Therefore, in Section \ref{sec:simulation_results} we take a simulation-based approach to uncover relevant theoretical insights into pairwise stability.

Finally, we provide a comparison of firm sizes as proxied by output levels. Rearranging the FOCs, we get the optimal quantity produced by any firm $i$ as:
\begin{equation}
	\label{eq:optimal_quant_effort}
    \str{q}_i\left(\mathbf{G}\right) = \frac{\phi}{\theta_i\eta_i(\mathbf{G})}
    \str{e}_i \left(\mathbf{G}\right).
\end{equation}
This can be compared with the result in \citet{hsieh2025endogenous}, showing $\str{e}_i = \theta \str{q}_i/(2\phi) $ (with the relevant changes in notation). The differences in equilibrium effort and profit levels in these two setups arise from, first, differences in sources of heterogeneity (in baseline costs in their model vs. R\&D productivities in ours) and, second, the solution methods implemented for equilibrium construction (simultaneous in their model vs. sequential action in ours). Equation~\eqref{eq:optimal_quant_effort} implies that a firm's equilibrium size is positively related to its equilibrium R\&D effort, consistent with empirical findings in \cite{cohen1996firm}. Additionally, we see from the expression that the optimal output level is negatively related to its productivity $\theta_i$ and its sparsity coefficient $\eta_i(\mathbf{G})$. 

Equations \eqref{eq:ratio_effort_profit_prop2}
and	\eqref{eq:optimal_quant_effort} imply that for comparing one high- to one low-productive firm with $\theta_i>\theta_j$:
\begin{equation}
    \label{eq:size_ratio}
    \frac{\str{q}_i(\mathbf{G}_{+ij})}{\str{q}_j(\mathbf{G}_{+ij})} = 1,
    \quad \text{ and } \quad
    \frac{\str{q}_i(\mathbf{G}_{-ij})}{\str{q}_j(\mathbf{G}_{-ij})} = \frac{\phi - \theta_j^2 \eta_j(\mathbf{G}_{-ij})}{\phi - \theta_i^2 \eta_i(\mathbf{G}_{-ij})} > 1.
\end{equation}
Equation~\eqref{eq:size_ratio} shows that although the less productive firm earns a higher profit than the more productive firm when they are connected, this ranking does not extend to output levels. Under the symmetric position assumption, the more productive firm always produces a weakly higher output, and strictly higher when the two firms are not connected. This result clarifies why the more productive firm may earn lower profits under connection: both firms face the same price and marginal cost of production leading to the same production level and therefore equal profits in the competitive market. However, the more productive firm incurs a higher R\&D cost due to its greater effort level, as shown in Equation~\eqref{eq:ratio_effort_profit_prop2}, leading to an overall lower profit.

\subsubsection{R\&D Network Formation}

Pairwise stability requires that no firm has an incentive to sever an existing link, and that no two firms both benefit (with at least one strictly gaining) from forming a new one \citep{jackson1996strategic}. In our case, the formal statement translates to the following: a network $\mathbf{G}$ is \textit{pairwise stable} if and only if for all pairs of firms $(i, j)$ where $i, j\neq i \in \mathcal{N}$:
\begin{itemize}
    \item \textit{Existing Alliances}: If ${G}_{ij} = 1$, then neither firm $i$ nor firm $j$ would benefit from dissolving the alliance, i.e., $\str{\pi}_i(\mathbf{G}) \ge \str{\pi}_i(\mathbf{G}_{-ij})$ and $\str{\pi}_j(\mathbf{G}) \ge \str{\pi}_j(\mathbf{G}_{-ij})$.
   
    \item \textit{Potential Alliances}: If ${G}_{ij} = 0$, then at least one of the firms would incur a loss in profit by forming the alliance, i.e., $\pi_i(\mathbf{G}_{+ij}) < \pi_i(\mathbf{G})$ or  $\pi_j(\mathbf{G}_{+ij}) < \pi_j(\mathbf{G})$.
\end{itemize}

Let the complete network $\mathbf{G}^C$ be the one in which all firms are linked to every other firm ($\forall i, j \in \mathcal{N}$ with $i \neq j$, $G^C_{ij} = 1$). Under homogeneity, \cite{goyal2001r} show that the complete network is pairwise stable. By generalizing to heterogeneous R\&D productivities, we present our next result.

\begin{proposition}[Thresholds for complete network stability]
    \label{prop:FC_stable}
    For any productivity distribution $\{\theta_i\}_{i \in \mathcal{N}} \in (0,1]^n$ with $\max_{i \in \mathcal{N}} \theta_i = 1$, there exist $\theta^{*},\theta^{**} \in (0, 1)$ with $\theta^{*} \le \theta^{**}$ such that the complete network $\mathbf{G}^C$ satisfies:
    \begin{enumerate}[label=(\roman*)]
        \item If $\exists j\in\mathcal{N}$ for which $ \theta_j < \theta^{*}$, then $\mathbf{G}^C$ is not pairwise stable.
        \item If $\theta_j \ge \theta^{**}$ for all $j \in \mathcal{N}$, then $\mathbf{G}^C$ is pairwise stable.
    \end{enumerate}
\end{proposition}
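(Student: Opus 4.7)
In $\mathbf{G}^C$ every pair $(i,j)$ occupies a symmetric position in the sense of Definition~\ref{definition_symmetry}, since both firms link to every $k\neq i,j$. Because $\mathbf{G}^C$ contains all possible links, the only pairwise-stability requirement is that no firm wishes to sever an existing link: $\Delta^{ij}\str{\pi}_i(\mathbf{G}^C)\ge 0$ and $\Delta^{ij}\str{\pi}_j(\mathbf{G}^C)\ge 0$ for each pair. The remark after Corollary~\ref{corrolary_profit} shows that whenever $\theta_i>\theta_j$ the first inequality (strictly) implies the second, so the binding constraint belongs to the more productive firm, and the proof reduces to controlling the sign of $\Delta^{ij}\str{\pi}_i(\mathbf{G}^C)$ for $\theta_i\ge\theta_j$. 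Proposition~\ref{prop:equilibrium_existance} makes $\str{\mathbf{e}}(\mathbf{G})$ the unique solution of a non-singular linear system, so both $\str{\mathbf{e}}$ and (via \eqref{eq:eq_profit_effort}) $\Delta^{ij}\str{\pi}_i(\mathbf{G}^C)$ depend continuously on the productivity vector $\{\theta_k\}$.

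\textbf{The two extremes.} For part~(ii), at full homogeneity $\theta_k=1$ the model collapses to the benchmark of \citet{goyal2001r}, in which the complete network is pairwise stable strictly; continuity then yields $\theta^{**}<1$ such that $\theta_k\ge\theta^{**}$ for all $k$ preserves all severance inequalities. For part~(i), I fix a most productive firm (WLOG $\theta_1=1$) and let $\theta_j\downarrow 0$. Exploiting the symmetry of $\mathbf{G}^C$ in \eqref{eq:optimal_effort_matrix} yields the closed form $\str{e}_k(\mathbf{G}^C)=\theta_k(\alpha-\bar{c})/[(n+1)^2\phi-\sum_\ell\theta_\ell^2]$, so the spillover that $j$ delivers to firm~$1$ is $\theta_j\str{e}_j=O(\theta_j^2)$ and vanishes in the limit. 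Severing the link, however, raises $j$'s marginal cost by the full amount $\theta_1\str{e}_1$ (since $j$ loses access to firm~$1$'s effort), contracting $j$'s Cournot output and thereby enlarging firm~$1$'s residual demand. Hence $\Delta^{1j}\str{\pi}_1(\mathbf{G}^C)<0$ for $\theta_j$ small, and continuity combined with monotonicity of this quantity in $\theta_j$ deliver $\theta^{*}>0$ with the required property. The inequality $\theta^{*}\le\theta^{**}$ is enforced by mutual consistency: a distribution with all $\theta_k\ge\theta^{**}$ except a single $\theta_j\in(\theta^{**},\theta^{*})$ could not be simultaneously stable (by (ii)) and unstable (by (i)).

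\textbf{Main obstacle.} The key technical step is signing $\Delta^{1j}\str{\pi}_1(\mathbf{G}^C)$ using the cut network $\mathbf{G}^C_{-1j}$, whose equilibrium effort vector does not admit as clean a closed form as $\mathbf{G}^C$ because firms~$1$ and $j$ form their own degree class ($d=n-2$) distinct from the remaining $n-2$ firms ($d=n-1$). The cleanest route is to note that $1$ and $j$ remain symmetrically positioned throughout, so Proposition~\ref{prop:summetric_ratio} pins down their relative efforts and profits in both $\mathbf{G}^C$ and $\mathbf{G}^C_{-1j}$; combining those ratios with a closed-form solution of the reduced linear system obtained after imposing within-class equality among the $n-2$ peripheral firms yields the sign. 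Establishing genuine monotonicity of $\Delta^{1j}\str{\pi}_1(\mathbf{G}^C)$ in $\theta_j$ — rather than merely its eventual negativity — is the subtler ingredient, since varying $\theta_j$ shifts the entire equilibrium profile through $\mathbf{A}(\mathbf{G})^{-1}$; the sign-preserving structure of $\mathbf{A}(\mathbf{G})$ guaranteed by Proposition~\ref{prop:equilibrium_existance} is what makes this comparative static tractable and ultimately bridges the two threshold statements.
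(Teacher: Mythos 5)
Your architecture matches the paper's: reduce pairwise stability of $\mathbf{G}^C$ to the severance decision of the more productive firm in each pair (your use of the remark after Corollary~\ref{corrolary_profit} is a clean way to justify this; the paper instead gets it from the fact that each pair threshold satisfies $\str{\theta}_{ij} < \theta_i$), establish a pair-specific cutoff in $\theta_j$ via the two limits $\theta_j \to 0$ and $\theta_j \to \theta_i$ plus monotonicity, and aggregate. However, the proof is incomplete exactly where you flag the ``main obstacle'': everything rests on a per-pair threshold result — the paper's Lemma~\ref{lemma:fc_deviation_threshold} — and you assert rather than establish it. Concretely, three things are missing. First, the closed-form equilibrium of the cut network $\mathbf{G}^C_{-1j}$ (the paper's Lemma~\ref{lemma:FC_deviate}); note that your proposed shortcut of ``imposing within-class equality among the $n-2$ peripheral firms'' is not available here, since those firms have heterogeneous $\theta_k$ and hence unequal efforts — what aggregates is $\sum_k \theta_k \str{e}_k$, not the efforts themselves. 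Second, the sign of $\Delta^{1j}\str{\pi}_1(\mathbf{G}^C)$ as $\theta_j \to 0$: your residual-demand heuristic ignores that firm $1$'s own sparsity coefficient jumps from $1/(n+1)$ to $2/(n+1)$ after severing, so its effort, cost, and R\&D expenditure all move; the paper's computation reduces the limit to showing $g(4)/g(1)>1$ for an auxiliary rational function $g$, which is not a triviality. Third, and most importantly, the monotonicity of the profit ratio in $\theta_j$ — which you explicitly defer — is what converts ``unstable for $\theta_j$ near $0$'' and ``stable for $\theta_j$ near $\theta_i$'' into a single threshold; the paper signs $\partial R/\partial\theta_2$ directly from the closed forms. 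Without these steps the proposal is a correct plan, not a proof.

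Two smaller points. Your continuity-at-homogeneity argument for part~(ii) is a legitimate alternative to the paper's explicit limit $\lim_{\theta_2\to\theta_1}R<1$, and would suffice for that part alone. But your ``mutual consistency'' argument for $\theta^{*}\le\theta^{**}$ does not work as stated: the proposition's thresholds are allowed to depend on the productivity distribution, so constructing a hypothetical distribution with one $\theta_j\in(\theta^{**},\theta^{*})$ changes the thresholds themselves and yields no contradiction. The paper avoids the issue entirely by defining $\theta^{*}$ and $\theta^{**}$ as the minimum and maximum of the same finite set $\Theta$ of pair thresholds, so the ordering is automatic. (Alternatively, one may simply replace $\theta^{*}$ by $\min\{\theta^{*},\theta^{**}\}$, which only weakens the hypothesis of part~(i).)
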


Proposition~\ref{prop:FC_stable} shows that when all firms are sufficiently similar in productivity, the complete network remains stable, as in the homogeneous benchmark. However, when productivity differences are large, highly productive firms no longer find it profitable to maintain links with much less productive partners. In this case, the complete network becomes unstable because the most productive firms prefer to sever some of their connections. The key intuition is simple. Linking to a less productive firm yields little benefit for a highly productive firm but still requires sharing the gains from its own R\&D effort. As the productivity gap widens, this imbalance grows, eventually making some links unattractive for the most productive firms. Stability therefore breaks down once the productivity of any firm falls below a threshold.

\subsection{Simulation preview: Link sustainability for random networks}

To generalize insights on pairwise stability beyond the complete network setting in Proposition \ref{prop:FC_stable}, we provide simulation-based evidence in Figure~\ref{fig:random_deviation} of link sustainability in random networks. In these simulations, we fix an arbitrary firm $i$ with productivity $\theta_i$ and consider a firm $j \neq i$ with a productivity $\theta_j$, identifying pair-specific threshold $\str{\theta}_{ij}$ values such that $i$ would not form a collaboration with $j$ if $\theta_j < \str{\theta}_{ij} < \theta_i$; i.e., if firm $j$ is sufficiently less productive than firm $i$. These pair-specific threshold values now depend on both the productivity distribution ($\beta$-distribution panels above) and the random network structure $\mathbf{G}$ with varying connection densities shown in different colors. 

In Figure \ref{fig:random_deviation}, we simulate random networks of $n = 20$ firms using an Erd\H{o}s-R\'{e}nyi generation \citep{erdHos1960evolution}, with probability of connection $\ell \in [0, 1]$ (where $\ell = 0$ corresponds to the empty network and $\ell = 1$ to the complete network). We obtain productivity distributions $(\theta_1, \ldots, \theta_n)$ as random samples from Beta distributions with varying parameters. Finally, we fix the productivities of two firms $i$ and $j$, having $\theta_i \in \{0.25, 0.5, 0.75\}$ (in three different rows) and allow $\theta_j$ to vary between $0$ and $\theta_i$ ($x$-axes values), and show how severing the link between such $i$ (solid line) and $j$ (dashed line) affects their profits, while changing the density of the network and the productivity distribution for other firms. Figure~\ref{fig:random_deviation} plots the productivity distributions in the upper panel that are used to calculate profit changes $\Delta^{ij} \str{\pi}(\mathbf{G}) / \str{\pi}(\mathbf{G}_{-ij})$ in the main (lower) panel, also varying the level of network connectivity $\ell$ within each case.

\begin{figure}[htbp]
    \centering
    \caption{Percentage changes in profits of two arbitrary firms after forming an R\&D collaboration link between them in random networks with varying connection density and firm productivity distributions}
    \includegraphics[width=\linewidth]{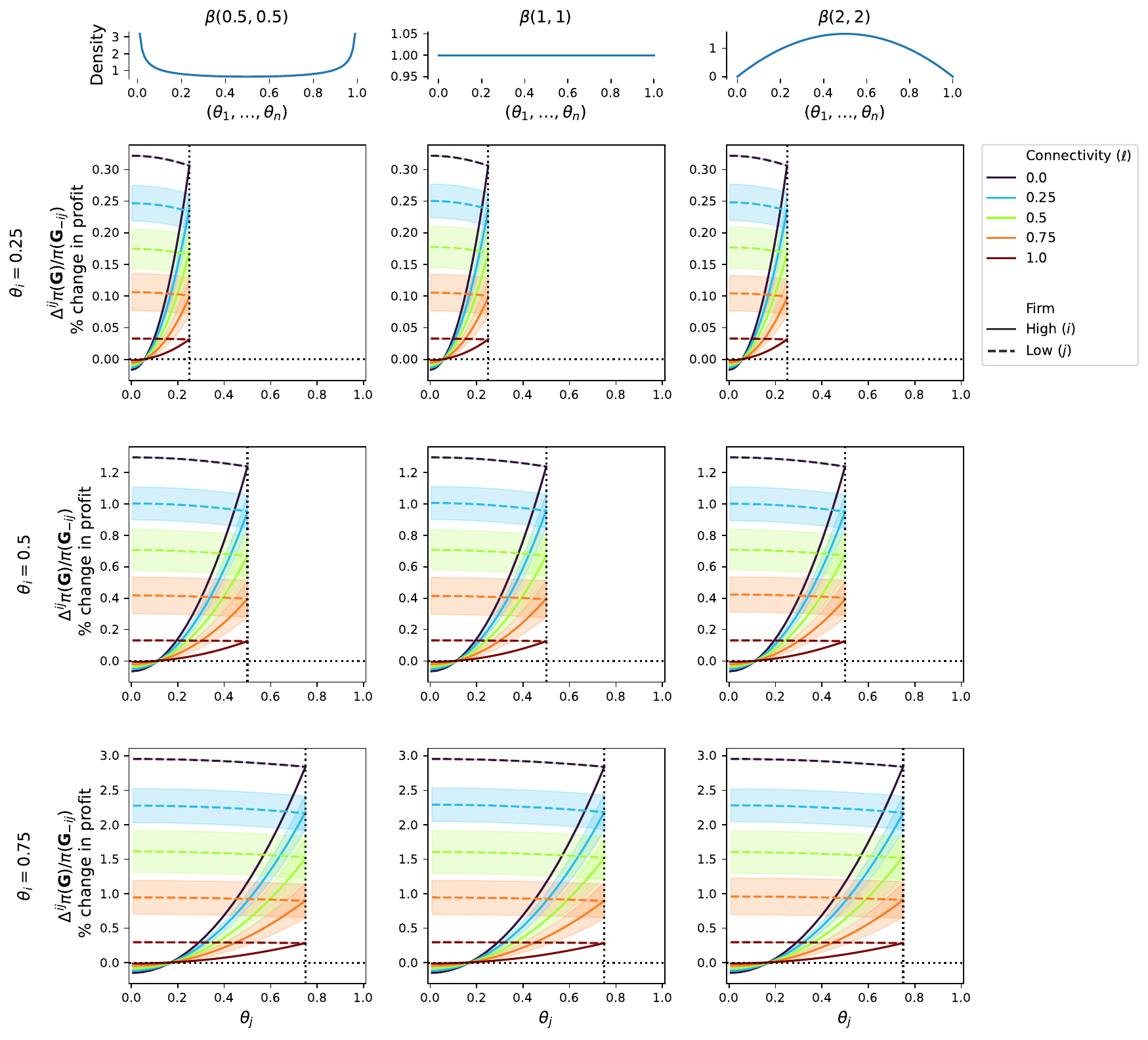}
    \label{fig:random_deviation}
    \fignote{Networks $\mathbf{G}$ with $n = 20$ firms generated as random binary adjacency matrices according to an Erdos-Renyi scheme with connectivity parameter $\ell \in \{0,0.25, 0.5, 0.75, 1\}$, represented by different colors. Productivity distributions $(\theta_1, \ldots, \theta_n)$ drawn as random samples from a Beta$(0.5, 0.5)$ distribution \textit{(left)}; Beta$(1, 1)$ \textit{(middle)}; and Beta$(2, 2)$ \textit{(right)}, with corresponding density plotted in the panels above. We set $\phi = \underline{\phi}$ for all networks. The percentage change in profit of firm $i$ (before and after adding a link to firm $j$) is drawn using a solid line, and for firm $j$ using a dashed line. Productivity of the higher-productive firm $i$ is chosen from $\theta_i \in \{0.25, 0.5, 0.75\}$. Percentage change in profits for firm $i$ crosses 0 at a threshold $\str{\theta}_{ij} \in (0, 1)$ for each productivity distribution and connectivity level cases shown. Figures are limited to the range $0 < \theta_j < \theta_i$ of each configuration to highlight the crossing at 0 for the higher-productive firm $i$.}
\end{figure}

\section{Simulation exercises for stability and welfare}
\label{sec:simulation_results}

The analysis so far shows that the complete network need not remain stable when productivities are heterogeneous, unlike \cite{goyal2001r}, where they also show that this structure 
is not welfare-maximizing in the sense that less connectivity is preferred in the social optimum. Building on these insights, in this section, we extend the analysis to investigate both the stability and welfare properties of alternative network structures in the heterogeneous-productivity setting.

Due to the intractable nature of the problem discussed earlier, we are unable to fully characterize the set of pairwise stable networks in general. Under homogeneity, \citet{goyal2001r} encounter similar difficulties in determining whether the complete network is the unique stable (symmetric) configuration, and show this is the case by restricting the problem to $n=4$ firms.\footnote{Readers are referred to pp. 696 of \citet{goyal2001r} for a detailed discussion on the intractability of the network formation problem in this class of models.} This challenge becomes much more pronounced under heterogeneity. Additionally, this intractability extends to making comparisons of equilibrium effort and profit levels across different network structures. Even for networks in which the matrix $\mathbf{A}(\mathbf{G})$ can be inverted to provide a closed-form solutions for efforts via Equation \eqref{eq:optimal_effort_matrix}, comparing effort levels across structures requires comparing rational functions of large degree, which is again unfeasible even for small number of firms $n$.

To open this black box, we categorize firms into two types based on their R\&D productivities: \textit{high}- and \textit{low}-productive firms— a simplified structure that makes the simulation exercise tractable for the remainder of the paper. The productivity of a high-type firm is therefore $\theta_H = 1$, and for a low-type firm it is denoted by $\theta \coloneqq \theta_L \in (0, 1)$. We also define $n_L$ and $n_H$ as the number of low- and high-type firms in the economy, respectively, such that $\rho \coloneqq n_H / n$ is the ratio of high-type firms. We set $\rho = 1/2$ for the initial simulations, and then we experiment with its value in extensions highlighting a novel crowding-out mechanism on welfare operating through high-type firms.

We begin by fully characterizing the set of stable networks for $n=4$ firms, by using computer-assisted derivations for the closed form solutions.\footnote{The full implementation of the computer-assisted derivations are in the \href{https://drive.google.com/file/d/1nduNFpbgYaX7XRpILcPp3MSKFv4mXkue/view?usp=sharing}{Supplementary Appendix}.} Crucially, the analysis reveals that differences in productivity levels and the distribution of high- and low-type firms within the economy can sustain pairwise stable networks different from the complete network.

Our first simulation documents a new fact: when introducing heterogeneity in firm R\&D productivities, the \textit{positive assortative} (PA) network $\mathbf{G}^{PA}$ ---in which firms fully connect only to the others of their same type--- becomes a pairwise stable configuration. In a two-types setting, a PA network is the one in which high- (low-) type firms are only linked to other high- (low-) type firms ($G_{ij}^{PA} = G_{ji}^{PA} = 1$ if $i$ and $j$ are of the same type, otherwise $G_{ij}^{PA} = G_{ji}^{PA} = 0$). This structure leads to two fully connected components of size $\rho \cdot n$ and $(1 - \rho) \cdot n$, composed solely of firms of high- and low-type firms, respectively.

\begin{figure}[htbp]
	\centering
	\caption{Pairwise stability domains for $n=4$ firms with two high- and two low-productivity firms for different productivity gap levels}
	\label{fig:stability_2H2L}
	\includegraphics[width=\linewidth]{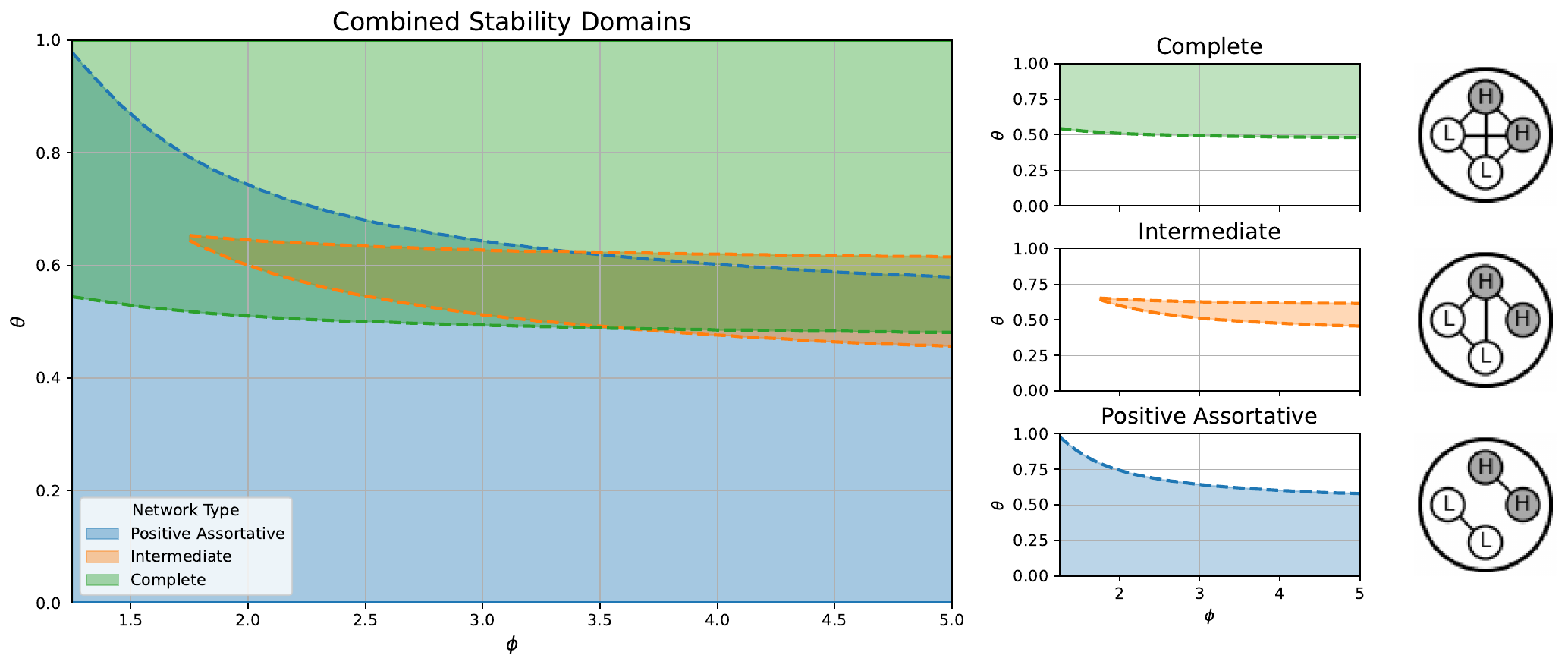}
	\fignote{The shaded region indicates the parameter combinations $(\theta, \phi)$ under which the corresponding configuration is pairwise stable. This figure only plots those network structures that showed non-empty stability region in simulations. The figure for every possible structure is provided in the \href{https://drive.google.com/file/d/1nduNFpbgYaX7XRpILcPp3MSKFv4mXkue/view?usp=sharing}{Supplementary Appendix}.}
\end{figure}

Figure \ref{fig:stability_2H2L} fully characterizes the set of pairwise stable networks for $n=4$ and $\rho=1/2$. To construct this figure, we solve for the closed-form equilibrium profits of each starting possible network configuration (in a space of $2^6$ starting structures) and numerically evaluate all possible link deviations from a given starting structure. We plot the contour of values for $\phi$ and $\theta$ that guarantee no profitable deviations exists from a starting network, which precisely corresponds to the full characterization of pairwise stable configurations.

Our results show that, for any fixed R\&D effort cost coefficient $\phi$, there exist threshold levels $\underline{\theta}$ (shown as the green dashed line) and $\overline{\theta}$ (shown as the blue dashed line) with $\underline{\theta} < \overline{\theta}$. The complete network is stable for $\underline{\theta} \le\theta <1$, and the PA network is stable for $0<\theta \leq \overline{\theta}$. Together, these facts imply both networks are stable in the interim region $\underline{\theta} \leq \theta  \leq \overline{\theta}$. In addition, both thresholds are decreasing in $\phi$, meaning that the stability area spanned by the complete/PA network is increasing/decreasing in R\&D cost.

Furthermore, while the PA and complete networks span the entire parameter space for $\theta$, there is one other configuration that emerges as stable for certain parameter values, which is shown in the middle right panel of Figure  \ref{fig:stability_2H2L}. This network is an intermediate between PA and complete, in the sense that it is denser/sparser than the PA/complete network through adding/severing links from a high-type firm to all low-type firms. Lastly, this characterization shows that for sufficiently low/high $\theta$, the PA/complete network is the unique stable structure.

\subsection{Welfare Comparison}

Following \cite{goyal2001r}, we define welfare as the total surplus generated in the economy by adding producer and consumer surplus.\footnote{For an alternative approach based on a utility function over consumption, see \cite{hsieh2025endogenous}.} Producer surplus ($PS$) is the sum of profits earned by all firms in the network, given by
$
    PS \coloneqq \sum_{i=1}^n \str{\pi}_i,
$
and consumer surplus ($CS$) is given by
$
    CS \coloneqq (1/2) \left(\sum_{i=1}^n \str{q}_i\right)^2,
$
as derived from the linear demand specification $p(\mathbf{q})$. Adding up these components, the total welfare of any given R\&D network structure $\mathbf{G}$ is given by:
\begin{equation}
    \label{eq:welfare_define}
    W(\mathbf{G}) = \frac{1}{2} \left[\sum_{i\in \mathcal{N}} \str{q}_i(\mathbf{G}) \right]^2 + \sum_{i\in\mathcal{N}} \str{\pi}_i(\mathbf{G}).
\end{equation}
To compare the welfare of the stable network configurations that arise in our setting, as well as R\&D efforts and profits of firms, in Figure \ref{fig:n6_profit_effort_stability} we next provide a simulation result increasing the number of firms to $n=6$ while keeping the two-type firm setting fixed.\footnote{When $n$ varies, we set $\phi = \underline{\phi}$, the lower bound of R\&D cost that generates positive efforts as described in Proposition \ref{prop:equilibrium_existance}.}

\begin{figure}
    \caption{Welfare, profit, and effort comparison of pairwise stable structures in $n=6$ setting, with $\rho=1/2$}
    \label{fig:n6_profit_effort_stability}
    \begin{subfigure}[t]{\linewidth}
        \centering
        \includegraphics[width=\linewidth]{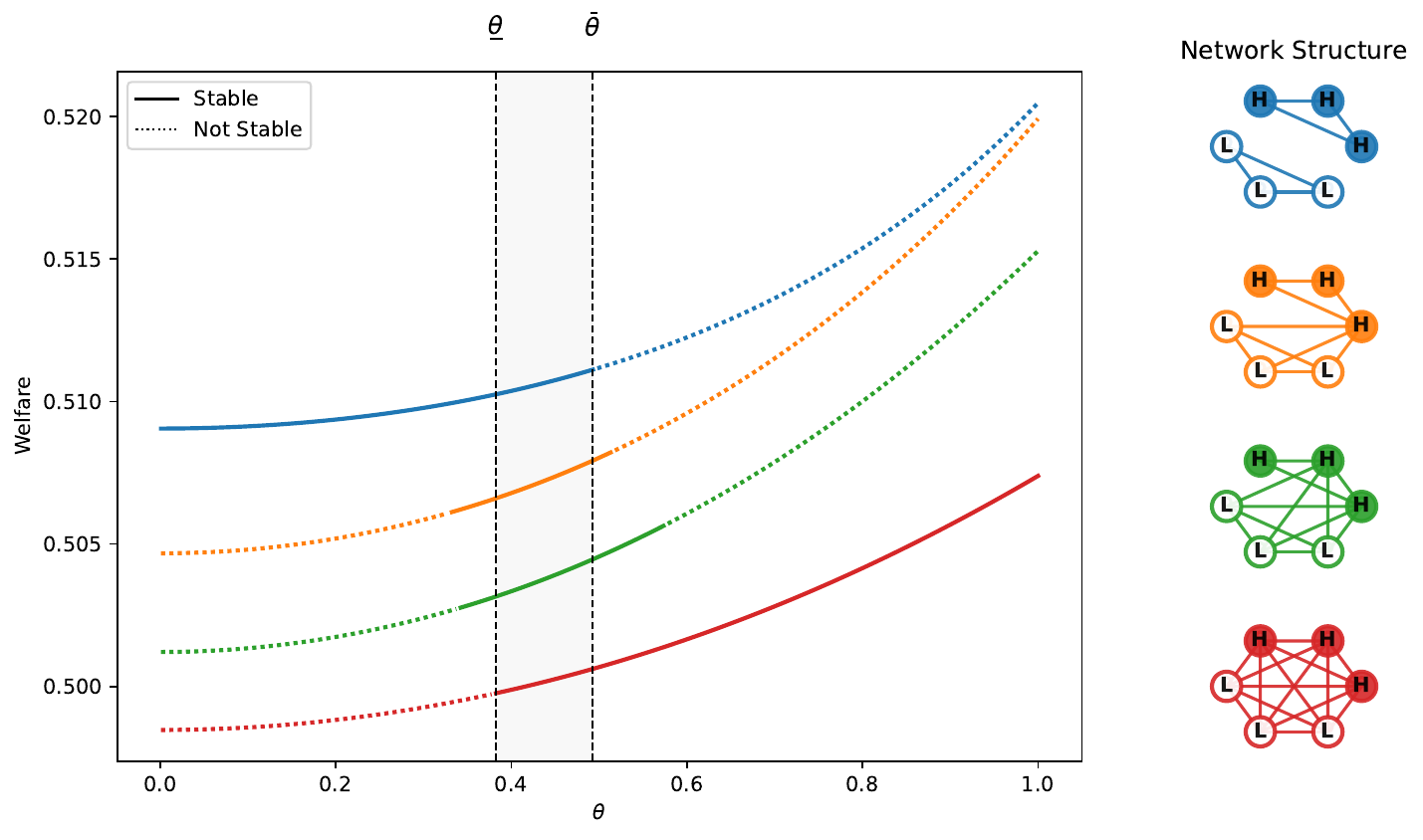}
    \end{subfigure}
    \begin{subfigure}[b]{\linewidth}
        \centering
        \includegraphics[width=1\linewidth]{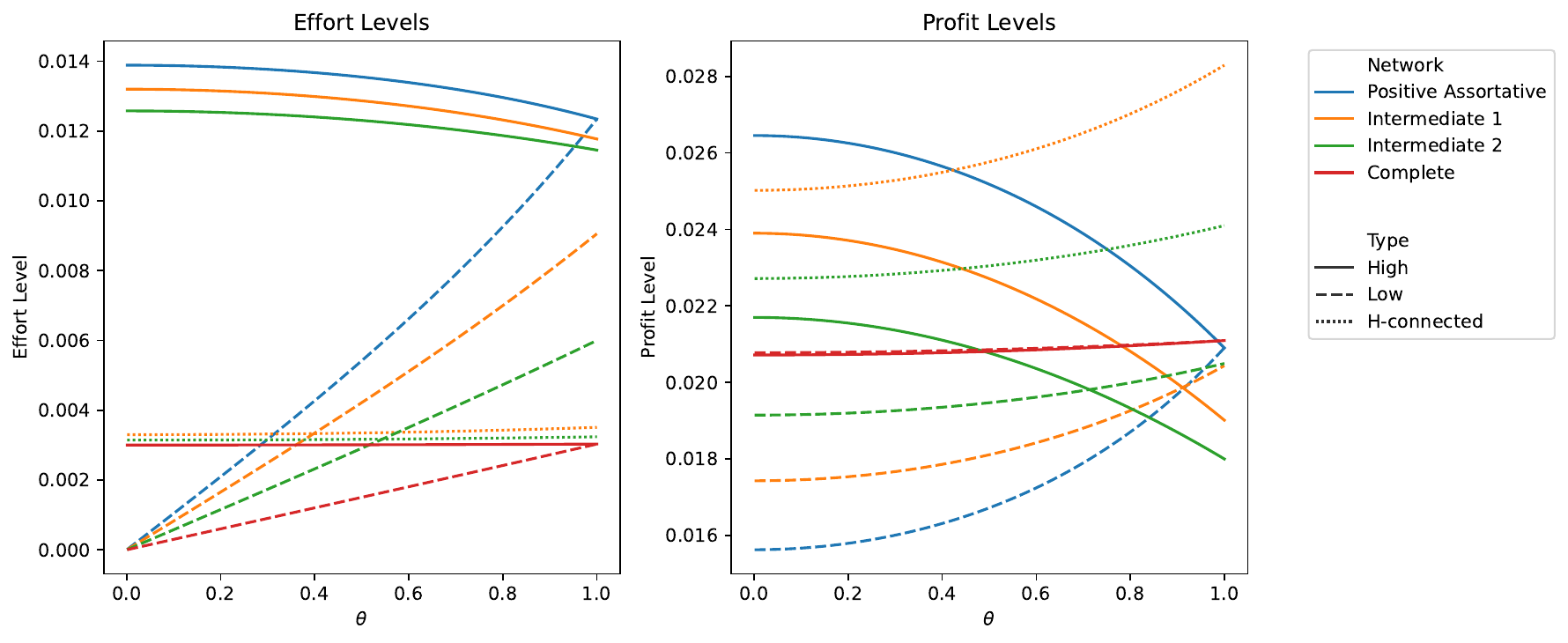}
    \end{subfigure}
    \fignote{
    This figure shows welfare (\textit{top}), individual firm efforts (\textit{bottom left}), and individual firm profits (\textit{bottom right}) for all network structures that are stable under $n = 6$ and $\rho = 0.5$. Colors represent different stable network structures showing a high-type firm connectivity-based transition from the PA network (blue) to the complete (red) network.
    All outcomes are plotted against the productivity of low-productivity firms ($\theta$). In the top panel, line style (dashed or solid) indicates stability, where only the structures with a non-zero stability region at $\phi = \underline{\phi}$ are included. In the lower panels, line style reflects firm type: solid for high-productivity firms, dashed for low-productivity firms, and dotted for the H-connected firms that are the high-productivity firms connected to all other firms in the intermediate structures.
    }
\end{figure}

The upper panel of Figure \ref{fig:n6_profit_effort_stability} shows four different stable network structures drawn in different colors. In these different networks, while low-type firms are always symmetric, high-type firms can have asymmetric connectivity with their own type. In particular, in the orange network, one high-type firm is connected to every other firm in the network; and in the green network, two high-type firms are connected to every other firm. In the figure, we label these high-type firms that become connected to the entire network as \textit{H-connected}. These intermediate configurations are pairwise stable for some $0<\theta<1$ values as shown in the upper panel and are relevant as they create a transition from the PA to the complete structure by increasing the connectivity of high-productivity firms. On the other hand, there are no stable networks that feature \textit{L-connected} firms ---those where a low-type firm connects to all other firms in the network--- and consequently, these are not drawn in the figure.

The welfare comparison reveals a clear ranking of network structures by their connectivity levels: the PA network delivers the highest welfare (stable for $\theta \leq \underline{\theta}$), while the complete network yields the lowest among the set of stable structures (stable for $\theta \geq \overline{\theta}$). However, this ranking holds only for intermediate values of $\theta$. When 
$\theta$ is sufficiently low or high, the PA or complete network, respectively, becomes the unique stable structure.

This leads to a counterintuitive insight: although welfare increases monotonically with $\theta$ for any given stable network, the level of $\theta$ determines which structure is pairwise stable, resulting in a non-monotonic welfare effect. As an example, starting from a low level of $\theta$ (i.e., R\&D productivity for low-type firms) where PA is the only pairwise stable configuration, welfare increases as $\theta$ rises up to a threshold $\bar{\theta}$. Further increases in $\theta$ past this threshold leads to a change in the stable structure from PA to complete, resulting in a discontinuous decrease in welfare that even falls below the starting point level. This observation has implications on how government and/or firm-level R\&D policies can have negative welfare effects even if they increase average R\&D productivity in the economy. Such non-monotonic effects of productivity are interesting, and further investigation on how R\&D policies endogenously affect welfare is left for future research. 

The lower panel of Figure \ref{fig:n6_profit_effort_stability} reports the efforts and profits of low-type, high-type, and H-connected type firms, labeled as dashed, solid, and dotted lines, respectively. The effort comparison shows that in intermediate configurations, high-type firms that only connect to other high-productivity firms have the highest effort level, whereas the H-connected firm has the lowest level of effort. On the other hand, in the PA and complete networks (where there is no H-connected type), it always holds that high-type firms have higher efforts than low-type firms. When profits are compared, in intermediate configurations, H-connected firms have the highest profits, but the ordering of low-type vs. high-type firms' profits depends on the $\theta$ level. In the PA network, it always holds that high-productivity firms have higher profits than low-productivity firms, with this ordering reversed for the complete network (as suggested by Corollary~\ref{corrolary_profit}) but with much smaller differences in profits. 

Finally, comparing the PA, complete, and other intermediate stable networks in Figure \ref{fig:n6_profit_effort_stability} reveals that group formation (under coalitional deviations) might serve as an alternative mechanism to study the network formation. While we employ the pairwise stability concept as commonly used in the R\&D networks literature, group formation analysis and its welfare implications are left as another avenue for future research.




\subsection{Crowding-out Effect}

Our next exercise experiments with $\rho$, the ratio of the high-productivity firms in the economy, to showcase a novel channel: a crowding-out effect of high-productivity firms on welfare. Intuitively, one might expect that when comparing two economies while holding $\theta$ fixed, the one with the higher number of high-productivity firms (i.e., with a higher $\rho$) would also have a higher welfare, as it has a larger average productivity. However, our results indicate that this relationship is mediated by the R\&D network. While welfare is monotonic in the fraction of high-type productivity firms for the complete network, it instead follows an inverted U-shape under the PA formation: both low and high fractions of high-productivity firms in the economy result in lower levels of welfare compared to an intermediate fraction of high-productivity firms. 

Under the PA formation, firms are clustered by type, which means that the average productivity in the economy is higher for a larger $\rho$ value, along with the size of the high-type connected component (with the low-type component shrinking in return) when we compare it with a setting with a lower $\rho$. This implies that an intermediate fraction of high-type firms in the economy maximizes welfare under the PA network formation process.

\begin{figure}[htbp]
    \centering
    \caption{Crowding-out effect of high-productivity firms on welfare}
    \label{fig:crowding_out}
    \includegraphics[width=\linewidth]{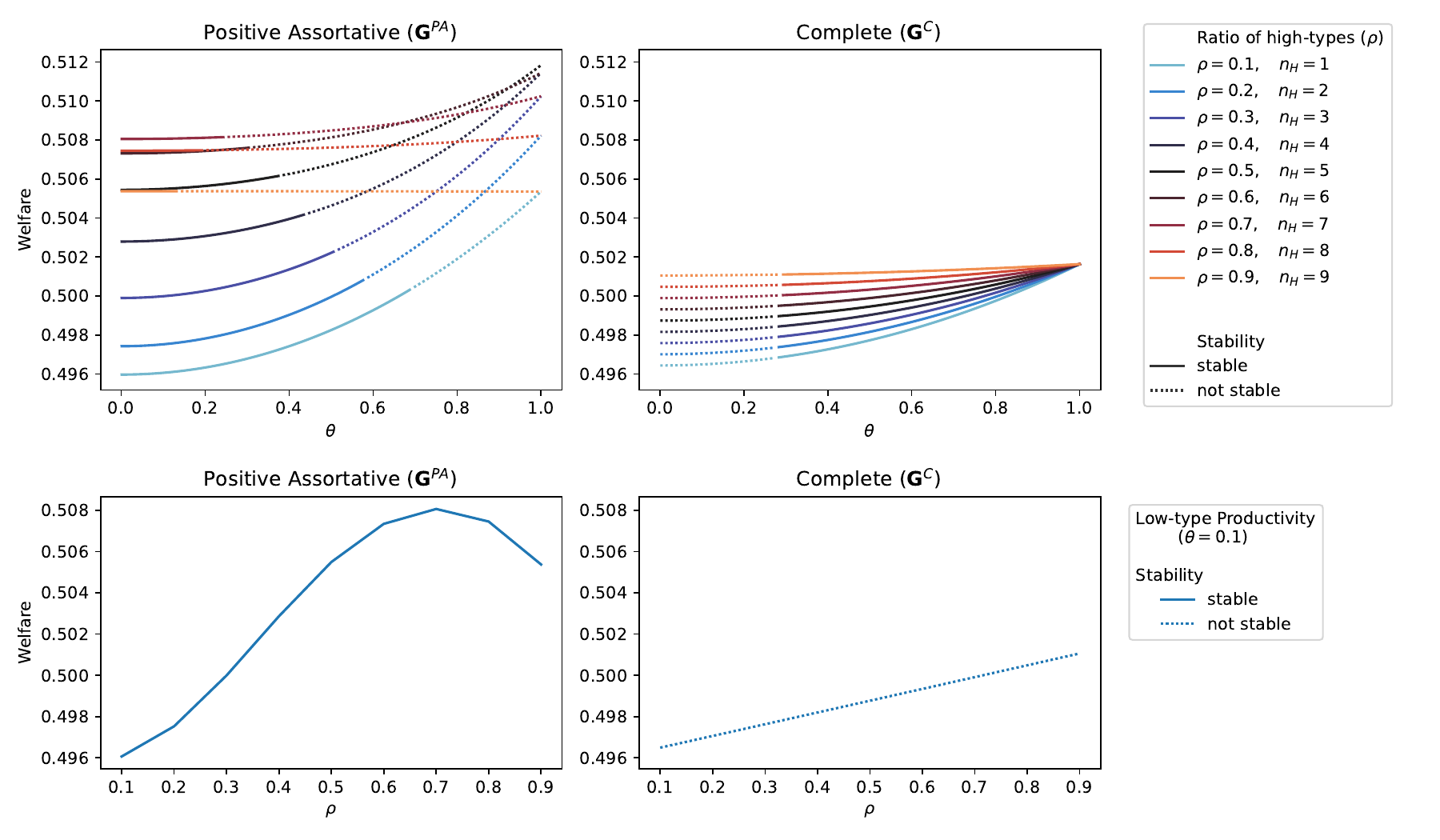}
    \fignote{
    This figure reports welfare for the positive assortative (\textit{left}) and complete (\textit{right}) networks when $n = 10$. The top row plots welfare over low-type productivity $\theta$, with colors indicating $\rho \in \{0.1,0.2,\dots,0.9\}$ and  (dashed) solid lines denoting (un)stability under $\phi = \underline{\phi}$. The top row shows that the PA network's stability region shrinks as $\rho$ increases, whereas the stability region of the complete network remains unchanged. The bottom row plots welfare with $\rho$ on the x-axis, for a single value of $\theta = 0.1$. For the PA network, welfare displays an inverted U-shape in $\rho$, capturing a crowding-out effect at higher values of $\rho$.
    }
\end{figure}

Figure \ref{fig:crowding_out} illustrates welfare levels for the PA and complete network structures as the fraction of high-productivity firms, $\rho$, increases. We modify the number of firms to $n = 10$ in our simulation results to provide a wider range of $\rho$ values. Under the PA structure, for an arbitrary low-type productivity $\theta$, welfare initially increases with the ratio of high-productivity firms but begins to decline once this ratio exceeds a certain threshold, as shown on the left side of the upper panel of Figure \ref{fig:crowding_out}. As the PA structure is stable when an economy exhibits large productivity gaps, shown in the lower panel of Figure \ref{fig:crowding_out} by setting $\theta=0.1$, the increasing ratio of high-productivity firms leads to an inverted U-shaped relationship with welfare in a network structure that remains stable.
In other words, when firms exhibit substantial heterogeneity in their productivities, there is a crowding-out effect of high-productivity firms on welfare in the stable PA configuration.

Figure \ref{fig:crowding_out} provides an additional insight: as $\rho$ rises, while the stability region of the complete network remains unchanged, this is not the case for the PA structure. Instead, the PA stability region shrinks as $\rho$ increases. Formally, the upper threshold $\overline{\theta}$, which ensures the stability of the PA network for any $\theta \leq \overline{\theta}$, decreases in the fraction of high-productivity firms, while $\underline{\theta}$, guaranteeing the stability of the complete network for any $\theta \geq \underline{\theta}$, remains unaffected.\footnote{For a comparison of how the stability regions of the PA and complete structures evolve under different settings, see Figure~\ref{fig:stable_PA_complete_n} in Appendix~\ref{apx:additional_results}.} 

As shown earlier in this Section (and illustrated in Figures \ref{fig:stability_2H2L} and \ref{fig:n6_profit_effort_stability}), the PA and complete network configurations are the main pairwise-stable structures under heterogeneity: together, they span almost the entire parameter space, meaning that when one structure is not stable, the other is typically stable. Therefore, understanding the crowding-out patterns in the PA structure becomes especially relevant in alternative economies where the PA structure is stable. 

Comparing the PA and complete networks across different values of $\rho$ reveals an important distinction. Under the complete formation, the connectivity structure remains fixed---every firm is linked to every other---and only the productivity distribution varies. Under the PA formation, by contrast, the connectivity structure itself evolves with $\rho$. Specifically, both the total number of links (i.e., network density) and the firms that constitute the connected components also change with $\rho$.\footnote{
Appendix Figure~\ref{fig:transiting_profit} disentangles this “composition-only” channel by holding the network fixed at a two-clique structure (two fully connected components of size $5$, as under PA when $\rho=1/2$) and varying only the productivity distribution via $\rho$. Upgrading firms one-by-one from $\theta$ to $1$ shows that the transitioning firm’s profit gain is larger for lower $\theta$ and is typically increasing in $\rho$, with a kink around $\rho=1/2$ when the fixed connectivity coincides with exact type clustering; the gain nevertheless remains positive. This exercise is illustrative rather than a stability result, since the imposed network is generally not pairwise stable away from $\rho=1/2$ (and sufficiently low $\theta$).
}

Specifically, the number of links in the PA structure is given by the sum of links within the two fully connected components---one with $n_H = \rho \cdot n$ high-productivity firms and the other with $n_L = (1-\rho)\cdot n$ low-productivity firms:
\begin{equation*}
    \text{Total Number of Links in PA} = \quad\frac{n_H\;(n_H - 1)}{2} + \frac{n_L\;(n_L - 1)}{2}
    = \frac{n\,(n-1)}{2} - n^2\, \rho\,(1-\rho).
\end{equation*}
This implies that as $\rho$ increases from 0 to 1, the number of links decreases until reaching a minimum at $\rho = 0.5$, symmetrically increasing thereafter. As \cite{goyal2001r} show, welfare in symmetric networks under the homogeneous setting exhibits an inverted-U relationship with the number of links. This naturally suggests that the crowding-out effect we observe in the PA structure under heterogeneity may stem from a generalized version of their result, where welfare displays an inverted-U shape in the network’s overall connectivity.

In the next section, we investigate this hypothesis by isolating the connectivity effect from the structural effect in the PA network. We find that although connectivity is correlated with the welfare pattern of the PA structure, it cannot fully account for the sharp increase at intermediate values or the decline at high values of $\rho$.

\subsection{Network Structure vs. Density Effect}

For symmetric networks with homogeneous firms, \cite{goyal2001r} show that welfare is concave in the average degree of the network, implying that welfare is maximized at an intermediate level of connectivity---a network configuration between the empty and complete. In our heterogeneous setting, we find a similar and more general connectivity effect, based on simulations with randomly generated networks instead of the more restricted symmetric network case. In Figure~\ref{fig:welfare_degree}, we depict the inverted U-shaped relationship between the network density and welfare for $n=10$ firms, where, for a given total number of links (on the $x$-axis), we average welfare over 1000 replications of networks where the links are randomly assigned.

\begin{figure}
    \centering
    \caption{Welfare over Network Link Density for Random vs. PA/Complete Structures}
    \label{fig:welfare_degree}
    \includegraphics[width=\linewidth]{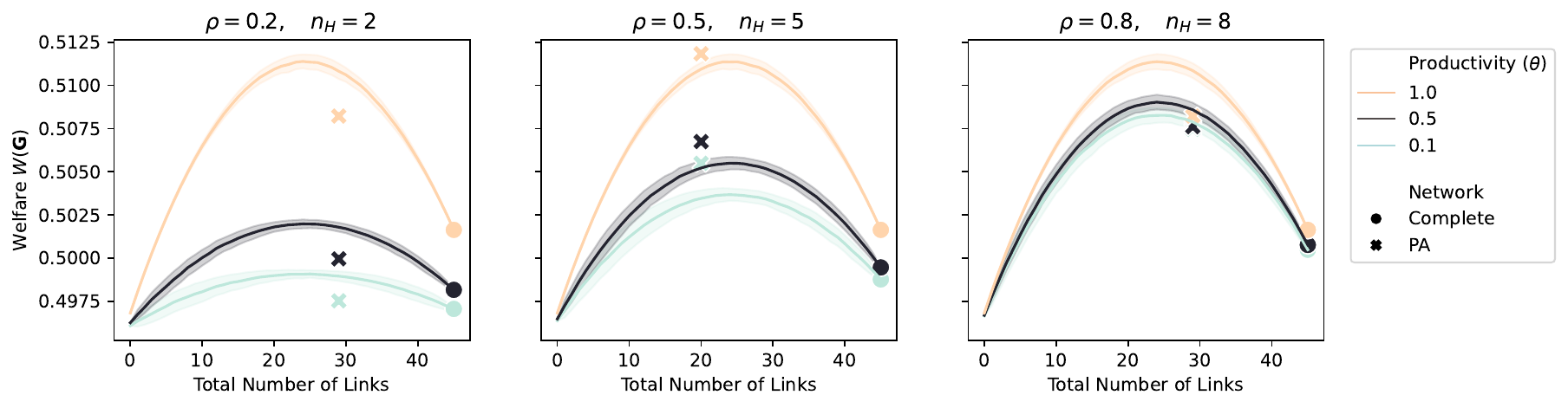}
    \fignote{
    Welfare is plotted against the total number of links in a randomly generated network for $n = 10$ under $\phi = \underline{\phi}$. Each panel corresponds to a different fraction of high-productivity firms in the economy, $\rho$: $0.2$ (\textit{left}), $0.5$ (\textit{middle}), and $0.8$ (\textit{right}). Colors represent different values of low-type productivity, $\theta \in \{0.1, 0.5, 1.0\}$. For each fixed total number of links (on the $x$-axis), networks are generated by randomly assigning links. Solid lines show average (across simulations) welfare, and shaded regions indicate plus and minus one standard deviation from the mean welfare.
    The welfare of the PA network and the complete network is marked with a cross and a circle, respectively.
    At the mid-level value of $\rho$, the PA network achieves a welfare level (for $\theta = 0.1$) that exceeds even the maximum average welfare obtained from the random networks when $\theta = 0.5$, highlighting the welfare advantage of the PA structure at intermediate values of $\rho$. This pattern does not occur for low or high values of $\rho$.
    }
\end{figure}

Furthermore, the simulations indicate that average welfare is positively correlated with the economy's average productivity, as the welfare curves shift upward as the productivity gap narrows (i.e., as $\theta$ increases) and as the proportion of high-productivity firms ($\rho$) rises.

However, when we compare the PA and complete structures to this average benchmark, a more complex picture emerges. While the complete network (circle markers in Figure \ref{fig:welfare_degree}) generally aligns with the shifts caused by $\theta$ and $\rho$ described in the previous paragraph, the PA structure (cross markers in Figure \ref{fig:welfare_degree}) exhibits distinct relationships with these quantities. To isolate this effect, Figure \ref{fig7_PA_random} provides a comparison by plotting welfare against the fraction of high-productivity firms ($\rho$) for both the PA structure and a set of randomly generated networks constrained to have the exact same total number of links.

\begin{figure}[htbp]
    \centering
    \caption{Network Structure Effect --- Welfare as a function of high-productivity firm ratio ($\rho$) in PA and Random Networks with Equal Total Number of Links}
    \includegraphics[width=\linewidth]{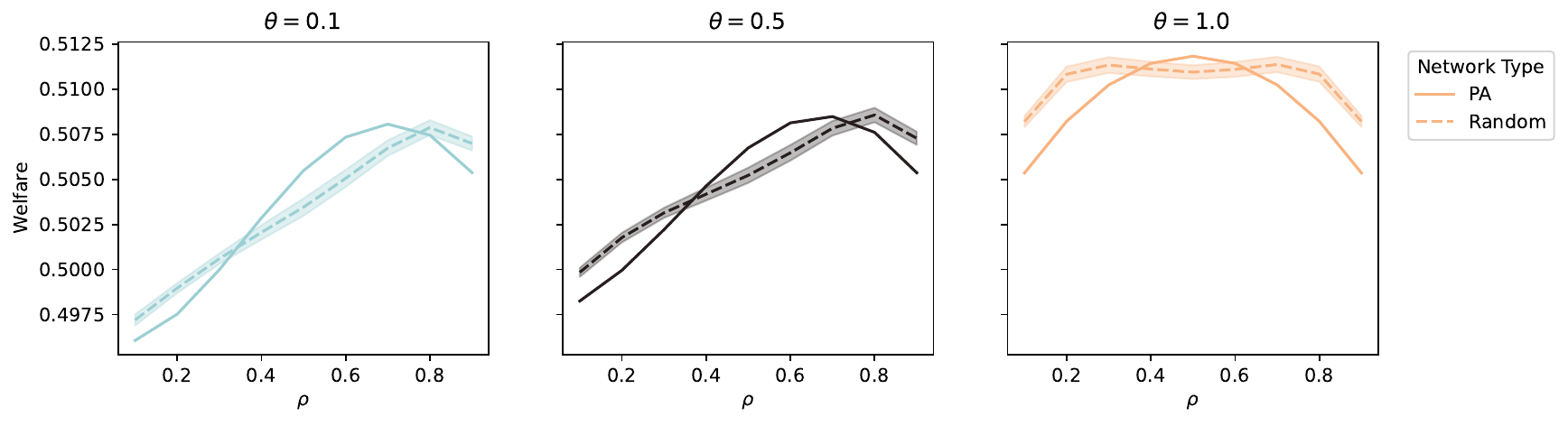}
    \label{fig7_PA_random}
    \fignote{
    Welfare is plotted for the positive assortative network alongside the average welfare from randomly generated networks with the same number of links for $n=10$ and $\phi = \underline{\phi}$. Panels correspond to different values of low-type productivity $\theta$: $0.1$ (\textit{left}), $0.5$ (\textit{middle}), and $1.0$ (\textit{right}). The solid lines show welfare under the PA structure. For each feasible $\rho$, the number of links in the corresponding PA network is computed, and random networks with the same $\rho$ and the same number of links are generated. The dashed lines denote the average welfare of these random networks, and the shaded regions represent plus and minus one standard deviation from the mean welfare.
    Although welfare in the PA and random networks are positively correlated, the differences between them are substantial, highlighting the effect of network structure on welfare.
    }
\end{figure}

Together, Figures \ref{fig:welfare_degree} and \ref{fig7_PA_random} indicate that the performance of the PA structure is highly non-monotonic relative to the average benchmark. At extreme values of $\rho$ (either low or high), the welfare in PA network is lower relative to the average welfare in random networks with equal link density. Conversely, at intermediate levels of $\rho$, the PA structure generates a significant welfare premium. This premium, however, diminishes as the productivity gap closes (i.e., as $\theta$ increases). 
The network structure advantage at intermediate $\rho$ is so pronounced in large productivity gap settings that it can override the effect of lower intrinsic productivity. For instance, at $\rho=0.5$, the PA network with a large productivity gap ($\theta=0.1$) achieves a welfare level exceeding even the maximum average welfare of random networks with a significantly smaller productivity gap ($\theta=0.5$).

Our findings highlight that, under heterogeneity, the specific topology of the PA network creates an additional effect that goes beyond connection density, such that simple measures like the total number of links cannot capture the full effect.

\section{Conclusion}
\label{sec:conclusion}

In this paper, we study R\&D network formation with heterogeneous productivities, generalizing the homogeneous framework in \cite{goyal2001r}. Firms differ in the potential benefit of their R\&D activities and alliances, creating heterogeneous gains from own R\&D efforts as well as from R\&D externalities in a network setting.

Under such heterogeneity, first, we show that when a low- and a high-productivity firm (having a symmetric position in the R\&D network) are connected, the relative gain of the less productive firm is higher than the more productive firm. Moreover, when such two firms are compared, the one with higher productivity exerts more R\&D effort, and hence pays a higher R\&D cost, than the one with lower productivity. These benchmark comparisons also imply that while low-productivity firms benefit from being connected to high-productivity firms, the reverse does not necessarily hold, creating heterogeneous incentives for link formation. We show the complete network---pairwise stable in a homogeneous setting---ceases to be stable when the productivity gap is sufficiently large. 

By using computer-assisted closed form solutions, we next describe the full set of stable network structures in a 4-firms setting. Our findings illustrate a stability feature: the positive assortative (PA) network structure, which is a productivity-level-based clustered network, and the complete network structure together almost cover the entire stability space. This means that when one of these two structures is not stable, the other typically is. Therefore, understanding the welfare and stability implications of the PA structure is crucial. This alternative configuration has unique implications for welfare measured as the aggregate consumer and producer surplus in the economy.

The PA structure yields higher welfare in comparison to the complete structure, uncovering an R\&D productivity puzzle: an economy with a higher average R\&D productivity may exhibit lower welfare due to (i) emergence of alternative stable structures and (ii) crowding-out of high-productivity firms. Finally, we compare the welfare in the PA and complete networks with the average welfare in random networks using simulations, isolating the effects of network structure on welfare from the network density effect. This provides further insights into how the productivity distribution, connectivity structure and network density determine aggregate welfare under heterogeneity. 

Altogether, our results highlight that R\&D productivity fundamentally shapes the organization of innovation, with distinct implications for firm-level outcomes and aggregate welfare. These findings suggest that policies aimed at enhancing firm-level R\&D productivities must account for their impact on endogenous R\&D alliances and effort, as these responses may reverse the intended welfare gains.


\bibliography{references}

@article{belderbos2004cooperative,
  title={Cooperative R\&D and firm performance},
  author={Belderbos, Ren{\'e} and Carree, Martin and Lokshin, Boris},
  journal={Research Policy},
  volume={33},
  number={10},
  pages={1477--1492},
  year={2004},
  publisher={Elsevier},
  doi={10.1016/j.respol.2004.07.003}
}

@article{calero2007research,
  title={Research cooperation within the bio-pharmaceutical industry: Network analyses of co-publications within and between firms},
  author={Calero, Clara and van Leeuwen, Thed and Tijssen, Robert},
  journal={Scientometrics},
  volume={71},
  number={1},
  pages={87--99},
  year={2007},
  publisher={Akad{\'e}miai Kiad{\'o}, co-published with Springer Science+ Business Media BV~…},
  doi = {10.1007/s11192-007-1650-y}
}

@article{konig2019r,
  title={{R\&D networks: Theory, empirics, and policy implications}},
  author={K{\"o}nig, Michael D and Liu, Xiaodong and Zenou, Yves},
  journal={Review of Economics and Statistics},
  volume={101},
  number={3},
  pages={476--491},
  year={2019},
  publisher={MIT Press One Rogers Street, Cambridge, MA 02142-1209, USA journals-info~…},
  doi = {10.1162/rest_a_00762}
}

@article{goyal2001r,
  title={{R\&D} networks},
  author={Goyal, Sanjeev and Moraga-Gonz\'{a}lez, Jose Luis},
  journal={The RAND Journal of Economics},
  volume={32},
  number={4},
  pages={686--707},
  year={2001},
  publisher={JSTOR},
  doi = {10.2307/2696388}
}

@article{bischi2012dynamicP1,
  title={A dynamic model of oligopoly with {R\&D} externalities along networks. Part I.},
  author={Bischi, Gian Italo and Lamantia, Fabio},
  journal={Mathematics and Computers in Simulation},
  volume={84},
  pages={51--65},
  year={2012},
  publisher={Elsevier},
  doi = {10.1016/j.matcom.2012.08.006}
}

@article{goyal2003networks,
  title={Networks of collaboration in oligopoly},
  author={Goyal, Sanjeev and Joshi, Sumit},
  journal={Games and Economic Behavior},
  volume={43},
  number={1},
  pages={57--85},
  year={2003},
  publisher={Elsevier},
  doi = {10.1016/S0899-8256(02)00562-6}
}

@article{zirulia2012role,
  title={The role of spillovers in {R\&D} network formation},
  author={Zirulia, Lorenzo},
  journal={Economics of Innovation and New Technology},
  volume={21},
  number={1},
  pages={83--105},
  year={2012},
  publisher={Taylor \& Francis},
  doi = {10.1080/10438599.2011.557558}
}

@article{billand2019firm,
  title={{Firm heterogeneity and the pattern of R\&D collaborations}},
  author={Billand, Pascal and Bravard, Christophe and Durieu, Jacques and Sarangi, Sudipta},
  journal={Economic Inquiry},
  volume={57},
  number={4},
  pages={1896--1914},
  year={2019},
  publisher={Wiley Online Library},
  doi = {10.1111/ecin.12789}
}

@article{jackson1996strategic,
  title={A strategic model of economic and social networks},
  author={Jackson, Matthew and Wolinsky, Asher},
  journal={Journal of Economic Theory},
  volume={71},
  number={1},
  pages={44--74},
  year={1996},
  doi={10.1007/978-3-540-24790-6_3}
}

@article{romer1990endogenous,
  title={Endogenous technological change},
  author={Romer, Paul M},
  journal={Journal of Political Economy},
  volume={98},
  number={5, Part 2},
  pages={S71--S102},
  year={1990},
  publisher={The University of Chicago Press},
  doi={10.1086/261725}
}

@article{jones1995r,
  title={{R\&D}-based models of economic growth},
  author={Jones, Charles I},
  journal={Journal of Political Economy},
  volume={103},
  number={4},
  pages={759--784},
  year={1995},
  publisher={The University of Chicago Press},
  doi={10.1086/262002}
}

@article{cassiman2002r,
  title={{R\&D} cooperation and spillovers: some empirical evidence from {Belgium}},
  author={Cassiman, Bruno and Veugelers, Reinhilde},
  journal={American Economic Review},
  volume={92},
  number={4},
  pages={1169--1184},
  year={2002},
  publisher={American Economic Association},
  doi={10.1257/00028280260344704}
}

@article{westbrock2010natural,
  title={Natural concentration in industrial research collaboration},
  author={Westbrock, Bastian},
  journal={The RAND Journal of Economics},
  volume={41},
  number={2},
  pages={351--371},
  year={2010},
  publisher={Wiley Online Library},
  doi={10.1111/j.1756-2171.2010.00103.x}
}

@article{hsieh2025endogenous,
  title={Endogenous Technology Spillovers in Dynamic {R\&D} Networks},
  author={Hsieh, Chih-Sheng and K{\"o}nig, Michael David and Liu, Xiaodong},
  journal={The RAND Journal of Economics},
  volume={56},
  number={4},
  pages={419--443},
  year={2025}
}

@article{bloom2020ideas,
  title={{Are ideas getting harder to find?}},
  author={Bloom, Nicholas and Jones, Charles I and Van Reenen, John and Webb, Michael},
  journal={American Economic Review},
  volume={110},
  number={4},
  pages={1104--1144},
  year={2020},
  publisher={American Economic Association 2014 Broadway, Suite 305, Nashville, TN 37203},
  doi={10.1257/aer.20180338}
}

@article{galeotti2006network,
  title={Network formation with heterogeneous players},
  author={Galeotti, Andrea and Goyal, Sanjeev and Kamphorst, Judith},
  journal={Games and Economic Behavior},
  volume={54},
  number={2},
  pages={353--372},
  year={2006},
  publisher={Elsevier},
  doi={10.1016/j.geb.2005.02.003}
}

@article{gershgorin1931uber,
  author    = {S. A. Gershgorin},
  title     = {{\"U}ber die Abgrenzung der Eigenwerte einer Matrix},
  journal   = {Izvestiya Akademii Nauk SSSR. Seriya Matematicheskaya},
  year      = {1931},
  volume    = {1},
  pages     = {749--754},
  language  = {German},
  note      = {Reprinted and translated in *Collected Works of S. A. Gershgorin*, ed. R. S. Varga, Ohio University Press, 1992}
}

@article{cohen1996firm,
	title={{Firm size and the nature of innovation within industries: the case of process and product R\&D}},
	author={Cohen, Wesley M and Klepper, Steven},
	journal={The Review of Economics and Statistics},
    volume={78},
    number={2},
	pages={232--243},
	year={1996},
	publisher={JSTOR}
}

@article{erdHos1960evolution,
  title={{On the evolution of random graphs}},
  author={Erd\H{o}s, Paul and R{\'e}nyi, Alfr{\'e}d},
  journal={Publ. Math. Inst. Hungar. Acad. Sci},
  volume={5},
  pages={17--61},
  year={1960}
}

@article{aghion1992constructive,
 ISSN = {00129682, 14680262},
 author = {Philippe Aghion and Peter Howitt},
 journal = {Econometrica},
 number = {2},
 pages = {323--351},
 publisher = {[Wiley, Econometric Society]},
 title = {{A Model of Growth Through Creative Destruction}},
 volume = {60},
 year = {1992},
 doi={10.2307/2951599}
}

@article{grossman1991quality,
  title={Quality ladders in the theory of growth},
  author={Grossman, Gene M and Helpman, Elhanan},
  journal={The Review of Economic Studies},
  volume={58},
  number={1},
  pages={43--61},
  year={1991},
  publisher={Wiley-Blackwell},
  doi={10.2307/2298044}
}

@article{lentz2008empirical,
  title={An empirical model of growth through product innovation},
  author={Lentz, Rasmus and Mortensen, Dale T},
  journal={Econometrica},
  volume={76},
  number={6},
  pages={1317--1373},
  year={2008},
  publisher={Wiley Online Library},
  doi={10.3982/ECTA5997}
}

@article{knott2008r,
  title={{R\&D/returns causality: Absorptive capacity or organizational IQ}},
  author={Knott, Anne Marie},
  journal={Management Science},
  volume={54},
  number={12},
  pages={2054--2067},
  year={2008},
  publisher={INFORMS},
  doi={10.1287/mnsc.1080.0933}
}

@article{akcigit2018growth,
  title={Growth through heterogeneous innovations},
  author={Akcigit, Ufuk and Kerr, William R},
  journal={Journal of Political Economy},
  volume={126},
  number={4},
  pages={1374--1443},
  year={2018},
  publisher={University of Chicago Press Chicago, IL},
  doi={10.1086/697901}
}

@article{henderson1994measuring,
  title={Measuring competence? Exploring firm effects in pharmaceutical research},
  author={Henderson, Rebecca and Cockburn, Iain},
  journal={Strategic Management Journal},
  volume={15},
  number={S1},
  pages={63--84},
  year={1994},
  publisher={Wiley Online Library},
  doi={10.1002/smj.4250150906}
}

@unpublished{magerman2016heterogeneous,
  title={Heterogeneous firms and the micro origins of aggregate fluctuations},
  author={Magerman, Glenn and De Bruyne, Karolien and Dhyne, Emmanuel and Van Hove, Jan},
  year={2016},
  note={Banque Nationale de Belgique}
}

@unpublished{milan2024learningS,
  author       = {Mil{\'a}n, Pau},
  title        = {Learning, Sorting, and Turnover in Unstable Environments},
  note         = {Working paper},
  year         = {2019},
}

@article{chen2023capability,
  title={Capability accumulation and conglomeratization in the information age},
  author={Chen, Jun and Elliott, Matthew and Koh, Andrew},
  journal={Journal of Economic Theory},
  volume={210},
  pages={105647},
  year={2023},
  publisher={Elsevier}
}

@article{acemoglu2025macroeconomics,
  title={The Macroeconomics of Supply Chain Disruptions},
  author={Acemoglu, Daron and Tahbaz-Salehi, Alireza},
  journal={Review of Economic Studies},
  volume={92},
  number={2},
  pages={656--695},
  year={2025},
  publisher={Oxford University Press UK}
}

@article{basu2022productivity,
  title={Productivity and the Welfare of Nations},
  author={Basu, Susanto and Pascali, Luigi and Schiantarelli, Fabio and Serven, Luis},
  journal={Journal of the European Economic Association},
  volume={20},
  number={4},
  pages={1647--1682},
  year={2022},
  publisher={Oxford University Press},
  doi={10.1093/jeea/jvac002}
}

@article{jones2000too,
  title={{Too much of a good thing? The economics of investment in R\&D}},
  author={Jones, Charles I and Williams, John C},
  journal={Journal of Economic Growth},
  volume={5},
  number={1},
  pages={65--85},
  year={2000},
  publisher={Springer}
}

@book{Horn_Johnson_2012, 
  place={Cambridge},
  edition={2},
  title={Matrix Analysis},
  publisher={Cambridge University Press},
  author={Horn, Roger A. and Johnson, Charles R.},
  year={2012}
}
\bibliographystyle{chicago}


\appendix

\section{Proofs}
    \label{apx:proofs}

We first provide the proofs of the propositions presented in the paper. Statements and proofs of Lemmas used in the main results are provided at the end of the section.

\begin{proof}[\textbf{Proof of Proposition \ref{prop:equilibrium_existance}}]	
    We first show that a bound on $\phi$ guarantees $\mathbf{A}(\mathbf{G})$ is non-singular, which establishes uniqueness of the effort profile $\str{\mathbf{e}}(\mathbf{G})$. We use the Levy-Desplanques theorem, which is a consequence of Gershgorin’s disc theorem \citep{gershgorin1931uber}: an $n \times n$ matrix $\mathbf{A}$ is nonsingular if every column is strictly diagonally dominant, i.e. $|A_{ii}| > \sum_{j\ne i} |A_{ij}|$ for all $i = 1, \ldots, n$.  For any firm $i \in \mathcal{N}$, we have
	\begin{align*}
		\sum_{j\ne i} |A_{ij}(\mathbf{G})|
		&= \sum_{j\ne i} \left|\theta_j\big\{(n+1)  G_{ij}-[1+ d_j(\mathbf{G})]\big\}\right|, \\
		&= \sum_{j\in \mathcal{N}_{i}(\mathbf{G})}\theta_j\big|n-d_j(\mathbf{G})\big|
		 + \sum_{k\in \mathcal{N}_{-i}(\mathbf{G})}\theta_k\big|1 + d_k(\mathbf{G})\big|.
	\end{align*}
	The cardinalities of the set of neighbours and non-neighbours are given by $|\mathcal{N}_{i}| = d_i(\mathbf{G})$ and $|\mathcal{N}_{-i}(\mathbf{G})| = n - 1 - d_i(\mathbf{G})$, respectively. For any neighbour $j\in \mathcal{N}_{i}(\mathbf{G})$, $1 \le d_j(\mathbf{G}) \le n-1$ so that $1 \le n - d_j(\mathbf{G}) \le n - 1$, and similarly for non-neighbours $k\in \mathcal{N}_{-i}(\mathbf{G})$, $0 \le d_k(\mathbf{G})\le n-2$ so that $1 \le 1 + d_k(\mathbf{G}) \le n - 1$. Additionally, recall $\theta_i \le 1$ for all $i \in \mathcal{N}$. Bounding quantities using their maximum for all firms in each of the sets gives
	\begin{equation*}
		s^* \coloneqq \max_{i \in \mathcal{N}} \sum_{j\ne i} |A_{ij}(\mathbf{G})| \le \max_{i \in \mathcal{N}} \{d_i(\mathbf{G}) \cdot (n - 1) + [n - 1 - d_i(\mathbf{G})](n - 1)\} \le (n - 1)^2.
	\end{equation*}
	A sufficient condition that guarantees $\mathbf{A}(\mathbf{G})$ is strictly diagonally dominant is therefore
	\begin{equation*}
		d^* \coloneqq \min_{i \in \mathcal{N}} |A_{ii}(\mathbf{G})| = \min_{i \in \mathcal{N}} \left| \frac{(n+1)^2\phi}{\theta_i [n-d_i(\mathbf{G})]}-\theta_i[n- d_i(\mathbf{G})]
		\right| > (n-1)^2 \ge s^*.
	\end{equation*}
	Using $\theta_i \le 1$ and $n - d_i(\mathbf{G}) \le n$, we see that any $\phi \ge 1$ ensures $A_{ii}(\mathbf{G}) > 0$. Under these bounds, $d^* \ge [(n+1)^2 \phi - n^2] / n $, and solving for $\phi$ yields the sufficient inequality
    \begin{equation*}
		\phi > \frac{n[(n-1)^2 + n]}{(n+1)^2}
		\eqqcolon \str{\phi}.
	\end{equation*}
    Any $\phi > \phi^* > n-1$ guarantees invertibility of $\mathbf{A}(\mathbf{G})$ and the fact we can express equilibrium efforts as $\mathbf{e}^*(\mathbf{G}) = (\alpha - \Bar{c}) \cdot \mathbf{A}(\mathbf{G})^{-1} \mathbf{1}_n$. We let $\mathbf{D}$ and $\mathbf{E}$ collect the diagonal and off-diagonal elements of $\mathbf{A}(\mathbf{G})$, respectively, such that for all $i,j \in \mathcal{N}$ with $i \ne j$ we have $D_{ii} = A_{ii}(\mathbf{G})$, $E_{ii} = 0$ and $E_{ij} = A_{ij}(\mathbf{G})$. We can then write $\mathbf{A}(\mathbf{G}) = \mathbf{D} + \mathbf{E} = (\mathbf{I}_n + \mathbf{E} \mathbf{D}^{-1}) \mathbf{D}$ and
    \begin{equation*}
        \mathbf{e}^*(\mathbf{G}) = (\alpha - \Bar{c}) \cdot \mathbf{D}^{-1} (\mathbf{I}_n + \mathbf{E} \mathbf{D}^{-1})^{-1} \mathbf{1}_n.
    \end{equation*}
    Note that invertibility of $\mathbf{D}$ follows as it is diagonal with $d^* > 0$ and that of $\mathbf{I}_n + \mathbf{E} \mathbf{D}^{-1}$ is guaranteed if $\|\mathbf{E} \mathbf{D}^{-1}\| < 1$ for any matrix norm $\| \cdot \|$ \citep[see, e.g., Corollary 5.6.16 of][]{Horn_Johnson_2012}. We will show this holds using the infinity norm defined as $\|\mathbf{A}\|_{\infty} \coloneqq \max_{1 \le i \le n} \sum_{j=1}^{m} |A_{ij}|$ for any $n \times m$ matrix $\mathbf{A}$. As $\alpha > \bar{c}$, equilibrium efforts $\mathbf{e}^*(\mathbf{G}) > 0$ follows if row sums $\mathbf{s} \coloneqq (\mathbf{I}_n + \mathbf{E} \mathbf{D}^{-1})^{-1} \mathbf{1}_n$ are entry-wise strictly positive.

    First, observe $\|\mathbf{D}^{-1}\|_{\infty} = \max_{i \in \mathcal{N}} |D_{ii}^{-1}| = 1/d^*$ and $\|E\|_{\infty} = s^*$. Under the uniqueness bound $\phi > \phi^*$, $\|\mathbf{E} \mathbf{D}^{-1}\|_{\infty} \le \|\mathbf{E}\|_{\infty} \|\mathbf{D}^{-1}\|_{\infty} = s^* / d^* < (n-1)^2 / (n-1)^2 = 1$, proving the invertibility of $\mathbf{I}_n + \mathbf{E} \mathbf{D}^{-1}$ and the fact we can write
    \begin{equation*}
        (\mathbf{I}_n + \mathbf{E} \mathbf{D}^{-1})^{-1} = \sum_{k=0}^{\infty} (-1)^{k} (\mathbf{E} \mathbf{D}^{-1})^k = \mathbf{I}_n + \sum_{k=1}^{\infty} (-1)^{k} (\mathbf{E} \mathbf{D}^{-1})^k.
    \end{equation*}
    Second, using this power series expansion, we can write $\mathbf{s} = \mathbf{1}_n + \bm{\varepsilon}$, where we define the remainder term $\bm{\varepsilon} \coloneqq \sum_{k=1}^{\infty} (-1)^{k} (\mathbf{E} \mathbf{D}^{-1})^k \mathbf{1}_n$. Showing $\|\bm{\varepsilon}\|_{\infty} = \max_{i \in \mathcal{N}} |\varepsilon_i| < 1$ is then enough to guarantee $s_i = 1 + \varepsilon_i > 0$ for all $i \in \mathcal{N}$. Observe that under the bound $\phi > \phi^*$ we have
    \begin{align*}
        \|\bm{\varepsilon}\|_{\infty} & \le \sum_{k=1}^{\infty} \|(\mathbf{E} \mathbf{D}^{-1})^k \|_{\infty} \cdot \|\mathbf{1}_n \|_{\infty} \le \sum_{k=1}^{\infty} \|\mathbf{E} \mathbf{D}^{-1} \|^k_{\infty} \le \sum_{k=1}^{\infty} \left( \frac{s^*}{d^*} \right)^k = \frac{s^*}{d^* - s^*}, \\
        & \implies \|\bm{\varepsilon}\|_{\infty} \le \frac{n(n-1)^2}{(n+1)^2\phi - n[n + (n-1)^2]}.
    \end{align*}
    Finally, note that any $\phi$ larger than the threshold $\underline{\phi}$ defined in \eqref{eq:phi_condition} ensures $\|\bm{\varepsilon}\|_{\infty} < 1$.
\end{proof}

\vspace*{-8mm}

\begin{proof}[\textbf{Proof of Proposition \ref{prop:summetric_ratio}}]

The FOC in equation~\eqref{eq:foc_effort_desc} for any firm $i \in \mathcal{N}$ can be written as
\begin{multline*}
	\left[\frac{\phi}{\theta_i \eta_i(\mathbf{G})} - \theta_i \eta_i(\mathbf{G})\right]\str{e}_i(\mathbf{G}) -\left[\eta_j(\mathbf{G}) - 1 + G_{ij}\right]  \theta_j \str{e}_j(\mathbf{G}) = \frac{\alpha - \bar{c}}{n+1} \\
    +\sum_{l\in\mathcal{N}_i(\mathbf{G}) \setminus \{j\}} \theta_l  \eta_l(\mathbf{G})  \str{e}_l(\mathbf{G}) - \sum_{k\in\mathcal{N}_{-i}(\mathbf{G}) \setminus \{j\}} [1 - \eta_k(\mathbf{G})] \theta_k \str{e}_k(\mathbf{G}).
\end{multline*}
For any two firms $i$ and $j$ having a symmetric position (Definition~\ref{definition_symmetry}), we have $\eta_i(\mathbf{G}) = \eta_j(\mathbf{G})$ as well as
\begin{equation*}
	\mathcal{N}_i(\mathbf{G}) \setminus \{j\} = \mathcal{N}_j(\mathbf{G}) \setminus \{i\} \quad \text{and} \quad \mathcal{N}_{-i}(\mathbf{G}) \setminus \{j\} = \mathcal{N}_{-j}(\mathbf{G}) \setminus \{i\}.
\end{equation*}
Therefore, the right-hand side of the FOCs for firms $i$ and $j$ with symmetric position are equal, from which we can obtain
\begin{multline*}
\left[\frac{\phi}{\theta_i \eta_i(\mathbf{G})} - \theta_i \eta_i(\mathbf{G})\right]\str{e}_i(\mathbf{G}) -\left[\eta_j(\mathbf{G}) - 1 + G_{ij}\right]  \theta_j \str{e}_j(\mathbf{G}) = \\\left[\frac{\phi}{\theta_j \eta_j(\mathbf{G})} - \theta_j \eta_j(\mathbf{G})\right]\str{e}_j(\mathbf{G}) -\left[\eta_i(\mathbf{G}) - 1 + G_{ji}\right]  \theta_i \str{e}_i(\mathbf{G}).
\end{multline*}
Rearranging the equation, we obtain the effort ratio between firms $i$ and $j$:
\begin{align*}
    \left[\frac{\phi}{\theta_i \eta_i(\mathbf{G})}-\theta_i (1-G_{ij})\right] \str{e}_i(\mathbf{G}) & = \left[\frac{\phi}{\theta_j \eta_j(\mathbf{G})}-\theta_j (1-G_{ji})\right] \str{e}_j(\mathbf{G}), \\
    \frac{\str{e}_i(\mathbf{G})}{\str{e}_j(\mathbf{G})} & = \frac{\theta_i}{\theta_j} \cdot \frac{\phi - \theta_j^2 \eta_j(\mathbf{G}) (1-G_{ij})}{\phi - \theta_i^2 \eta_i(\mathbf{G}) (1-G_{ji})}.
\end{align*}
Finally, substituting this into the equilibrium profit equation~\eqref{eq:eq_profit_effort}, we obtain the profit ratio:
\begin{equation*}
    \frac{\str{\pi}_i(\mathbf{G})}{\str{\pi}_j(\mathbf{G})} = \frac{\phi - \theta_i^2  \eta_i^2(\mathbf{G})}{\phi - \theta_j^2 \eta_j^2(\mathbf{G})}\cdot \left[ 
        \frac{\phi - \theta_j^2 \eta_j(\mathbf{G}) (1-G_{ij})}{\phi - \theta_i^2 \eta_i(\mathbf{G}) (1-G_{ji})}
        \right]^2.
\end{equation*}
\end{proof}

\vspace{-10mm}

\begin{proof}[\textbf{Proof of Corollary~\ref{corrolary_profit}}]
    Showing that $\str{\pi}_i(\mathbf{G}_{+ij}) < \str{\pi}_j(\mathbf{G}_{+ij})$ is straightforward, as we have:
    $$
    \frac{\str{\pi}_i(\mathbf{G}_{+ij})}{\str{\pi}_j(\mathbf{G}_{+ij})} = \frac{\phi - \theta_i^2 \eta^2_i(\mathbf{G}_{+ij})}{\phi - \theta_j^2 \eta^2_j(\mathbf{G}_{+ij})}.
    $$
    Due to the symmetric position assumption, $\eta^2_i(\mathbf{G}_{+ij}) = \eta^2_j(\mathbf{G}_{+ij})$, so the term $\phi - \theta_i^2 \eta^2_i(\mathbf{G}_{+ij}) > \phi - \theta_j^2 \eta^2_j(\mathbf{G}_{+ij})$ when $\theta_i > \theta_j$, implying $\str{\pi}_i(\mathbf{G}_{+ij})<\str{\pi}_j(\mathbf{G}_{+ij})$.
    Now, in the case when $G_{ij}=0$, we have
    \begin{align*}
        \frac{\str{\pi}_i(\mathbf{G}_{-ij})}{\str{\pi}_j(\mathbf{G}_{-ij})} 
        &= \frac{\phi - \theta_i^2 \eta^2_i(\mathbf{G}_{-ij})}{\phi - \theta_j^2 \eta^2_j(\mathbf{G}_{-ij})} \left(\frac{\phi - \theta_j^2 \eta_j(\mathbf{G}_{-ij})}{\phi - \theta_i^2 \eta_i(\mathbf{G}_{-ij})} \right)^2 \\
        &= \underbrace{\frac{\phi - \theta_i^2 \eta^2_i(\mathbf{G}_{-ij})}{\phi - \theta_i^2 \eta_i(\mathbf{G}_{-ij})} \cdot \frac{\phi - \theta_j^2 \eta_j(\mathbf{G}_{-ij})}{\phi - \theta_j^2 \eta^2_j(\mathbf{G}_{-ij})}}_{=f(\theta_i^2)/f(\theta_j^2)} \cdot \underbrace{\frac{\phi - \theta_j^2 \eta_j(\mathbf{G}_{-ij})}{\phi - \theta_i^2 \eta_i(\mathbf{G}_{-ij})}}_{>1},
    \end{align*}
    where we define $f(x):= [\phi - x \cdot \eta_i^2(\mathbf{G}_{-ij})] / \left[\phi - x \cdot \eta_i(\mathbf{G}_{-ij})\right]$.
    The last fraction is greater than 1, as $\theta_i>\theta_j$. We now show that $f(x)$ is an strictly increasing function, which makes the product of the first two terms $f(\theta_i^2)/f(\theta_j^2) > 1$, guaranteeing $\str{\pi}_i(\mathbf{G}_{-ij}) > \str{\pi}_i(\mathbf{G}_{-ij})$. Taking the derivative of $f$ with respect to $x$, we have:
    \begin{align*}
        \frac{d f(x)}{d x} 
        &= \frac{\phi \cdot \eta_i(\mathbf{G}_{-ij}) [1 - \eta_i(\mathbf{G}_{-ij})]}{[\phi - x \cdot \eta_i(\mathbf{G}_{-ij})]^2}.
    \end{align*}
    Since $\phi \ge 1$, $x \in (0, 1]$, and $\eta_i(\mathbf{G}_{-ij}) = \eta_j(\mathbf{G}_{-ij}) \in [2/(n+1), n/(n+1)] \subset (0, 1)$, we have $f'(x) > 0$.
\end{proof}

\vspace{-8mm}

\begin{proof}[\textbf{Proof of Proposition \ref{prop:FC_stable}}]

Without loss of generality, fix a productivity distribution $\{\theta_i\}_{i\in \mathcal{N}} \in (0, 1]^n$ such that $1 = \theta_1 \ge \cdots \ge \theta_n > 0$. Lemma \ref{lemma:fc_deviation_threshold} guarantees the existence of a set of thresholds $\{\theta_{ij}^*\}_{i \ne j \in \mathcal{N}}$ with $\theta_{ij}^* \in (0, \theta_i)$. On the one hand, notice that for each firm $j \in \mathcal{N} \setminus \{n\}$, all firms $i \in \mathcal{N}$ with $i > j$ would maintain the collaboration link $(i, j)$ since $\theta_{ij}^* < \theta_i \le \theta_j$. On the other hand, for each firm $j \in \mathcal{N} \setminus \{1\}$, all firms $i \in \mathcal{N}$ with $i < j$ would break the link $(i, j)$ from the complete network if $\theta_j < \theta_{ij}^* < \theta_i$. For each firm $j \in \mathcal{N} \setminus \{1\}$, we define $\Theta_j \coloneqq \{\theta_{ij}^* \mid i < j\}$ and $\theta_j^* \coloneqq \max \Theta_j$. We then define an economy-wide $\Theta \coloneqq \{ \theta_j^* \mid 1 < j \le n\}$. Note that no firm $i \ne 1$ wishes to break the link to the most productive firm 1 as $\theta_{i1}^* < \theta_i \le 1$.

First, let $\theta^* \coloneqq \min \Theta$. If there exists any firm $j \in \mathcal{N} \setminus \{1\}$ with productivity level $\theta_j < \theta^*$, then $\theta_j < \theta^* \le \theta_j^* = \theta_{ij}^*$ for some firm $i \in \mathcal{N}$ with $i < j$. Therefore, firm $i$ does not wish to maintain link $(i, j)$, leading to $\mathbf{G}^C$ not being pairwise stable. Second, let $\theta^{**} \coloneqq \max \Theta$. If all firms $j \in \mathcal{N} \setminus \{1\}$ satisfy $\theta_j \ge \theta^{**}$, then $\theta_j \ge \theta_j^* \ge \theta_{ij}^*$ for all firms $i \in \mathcal{N}$ with $i < j$. Therefore, all links are sustainable and the complete network remains pairwise stable. The proof is completed by noting $\theta^* \le \theta^{**}$ as they are the minimum and maximum, respectively, over the set $\Theta$.    
\end{proof}

\begin{lemma}
	\label{lemma:FC_complete}
	The complete network $\mathbf{G}^C$ is characterized by $G_{ij}^C = 1$ for all $i, j \in \mathcal{N}$ with $i \ne j$. The effort level of this configuration can be derived in closed form as
	\begin{equation}
		\str{e}_i(\mathbf{G}^C) =  \frac{(\alpha-\bar{c})\theta_i}{(n+1)^2 \phi - \sum_{j=1}^{n} \theta_j^2}. \label{eq:effort_complete}
	\end{equation}
\end{lemma}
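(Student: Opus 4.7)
The plan is to directly solve the FOC system in equation~\eqref{eq:foc_effort_desc} by exploiting the symmetry that the complete network imposes on the sparsity coefficients. First, I would observe that in $\mathbf{G}^C$ every firm has degree $d_i(\mathbf{G}^C) = n - 1$, so the sparsity coefficient collapses to $\eta_i(\mathbf{G}^C) = 1/(n+1)$ uniformly across firms; additionally, $\mathcal{N}_i(\mathbf{G}^C) = \mathcal{N} \setminus \{i\}$ and $\mathcal{N}_{-i}(\mathbf{G}^C) = \emptyset$, so the non-neighbor sum vanishes. Substituting these facts into \eqref{eq:foc_effort_desc} and clearing denominators reduces the first-order condition to
\begin{equation*}
\left[(n+1)^2 \phi - \theta_i^2\right] \str{e}_i(\mathbf{G}^C) = \theta_i \left[ (\alpha - \bar{c}) + \sum_{j \ne i} \theta_j \, \str{e}_j(\mathbf{G}^C) \right],
\end{equation*}
which, by absorbing the $\theta_i^2 \, \str{e}_i(\mathbf{G}^C)$ term on the left-hand side into the right-hand sum as the missing $j = i$ summand, is equivalent to
\begin{equation*}
(n+1)^2 \phi \, \str{e}_i(\mathbf{G}^C) = \theta_i \left[ (\alpha - \bar{c}) + \sum_{j=1}^{n} \theta_j \, \str{e}_j(\mathbf{G}^C) \right].
\end{equation*}

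The right-hand side now depends on firm $i$ only through the scalar $\theta_i$, which invites solving the system by aggregation. Defining $S_e \coloneqq \sum_{j=1}^n \theta_j \, \str{e}_j(\mathbf{G}^C)$, multiplying the reduced FOC by $\theta_i$, and summing over all $i \in \mathcal{N}$ yields
\begin{equation*}
(n+1)^2 \phi \, S_e = \left[ \sum_{j=1}^n \theta_j^2 \right] \left[ (\alpha - \bar{c}) + S_e \right],
\end{equation*}
which solves algebraically for $S_e$. Back-substituting this scalar into the firm-level FOC then delivers the claimed closed form. Existence, uniqueness, and strict positivity of the solution are guaranteed by Proposition~\ref{prop:equilibrium_existance} whenever $\phi > \underline{\phi}$; the denominator $(n+1)^2 \phi - \sum_j \theta_j^2$ is strictly positive under this bound since $\sum_j \theta_j^2 \le n$ while $(n+1)^2 \underline{\phi} = n[2(n-1)^2 + n] \ge n$.

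The argument is essentially algebraic, so the main obstacle is bookkeeping rather than any conceptual difficulty. The key simplifying observation is that the complete network renders the FOC system symmetric enough that a single scalar---the weighted effort sum $S_e$---fully captures the network-wide interaction, reducing the $n$-dimensional linear system from Proposition~\ref{prop:equilibrium_existance} to a one-dimensional equation that admits a direct solution.
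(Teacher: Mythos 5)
Your proposal is correct and follows essentially the same route as the paper's proof: both reduce the complete-network FOC to $(n+1)^2\phi\,\str{e}_i = \theta_i\,[(\alpha-\bar{c}) + \sum_{j=1}^{n}\theta_j \str{e}_j]$, aggregate by multiplying by $\theta_i$ and summing to solve for the scalar $\sum_j \theta_j \str{e}_j$, and back-substitute. The only addition is your (correct) check that the denominator is strictly positive under the bound of Proposition~\ref{prop:equilibrium_existance}, which the paper leaves implicit.
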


\begin{proof}
	The FOC \eqref{eq:foc_effort_desc} for the complete network structure can be re-expressed as
	\begin{align*}
		\str{e}_i(\mathbf{G}^C) &= \frac{\theta_i}{(n+1)^2 \phi - \theta_i^2} \left[\alpha-\bar{c} + \sum_{j\ne i} \theta_j \str{e}_j(\mathbf{G}^C) \right] =  \frac{\theta_i}{(n+1)^2 \phi}\left[\alpha-\bar{c} + \sum_{j = 1}^{n} \theta_j \str{e}_j(\mathbf{G}^C) \right],
	\end{align*}
	where we use the fact that $\eta_i(\mathbf{G}^C) = [n - d_i(\mathbf{G}^C)]/(n + 1) = 1/(n+1)$ for all $i \in \mathcal{N}$. Pre-multiplying this equation by $\theta_j$ and summing across $j \in \mathcal{N}$ leads to
	\begin{equation*}
		\sum_{j=1}^{n} \theta_j \str{e}_j(\mathbf{G}^C) = \frac{(\alpha-\bar{c}) \sum_{j=1}^{n} \theta_j^2}{(n+1)^2 \phi - \sum_{j=1}^{n} \theta_j^2}.
	\end{equation*}
	Replacing this result into the FOC and re-arranging terms yields the closed-form solution for complete network efforts \eqref{eq:effort_complete}.
\end{proof}

\begin{lemma}
	\label{lemma:FC_deviate}
	Consider a network $\mathbf{G}^{C}_{-kl}$, which removes the link between arbitrary firms $k$ and $l$ while preserving all other connections from the complete network. The effort levels for all firms in this configuration can be derived in closed form as (for $i \in \mathcal{N} \setminus \{k, l\}$)
	\begin{align}
		\str{e}_k(\mathbf{G}^{C}_{-kl}) & = \frac{\alpha - \bar{c}}{b_k(1 - Q) - 2\theta_k Q + \theta_l \lambda [n(1-Q)-(1+Q)]} , \label{eq:efforts_k} \\
		\str{e}_l(\mathbf{G}^{C}_{-kl}) & = \frac{(\alpha - \bar{c}) \lambda}{b_k(1 - Q) - 2\theta_k Q + \theta_l \lambda [n(1-Q)-(1+Q)]} , \label{eq:efforts_l} \\
		\str{e}_i(\mathbf{G}^{C}_{-kl}) & = \frac{(\alpha - \bar{c}) \theta_i [b_k + 2\theta_k + (n+1) \theta_l \lambda]}{(n+1)^2 \phi \{b_k(1 - Q) - 2\theta_k Q + \theta_l \lambda [n(1-Q)-(1+Q)]\} }, \label{eq:efforts_i}
	\end{align}
	where we define
	\begin{align*}
        b_k \coloneqq \frac{(n+1)^2 \phi}{2\theta_k} - 2\theta_k, \quad
		\lambda \coloneqq \frac{\theta_l}{\theta_k} \cdot \frac{(n+1)\phi - 2\theta_k^2}{(n+1)\phi - 2\theta_l^2}, \quad \text{and} \quad
		Q \coloneqq \frac{\sum_{j \in \mathcal{N} \setminus \{k, l\}} \theta_j^2}{(n+1)^2 \phi}.
	\end{align*}
\end{lemma}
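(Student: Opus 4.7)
The plan is to exploit two symmetries of $\mathbf{G}^C_{-kl}$ to reduce the $n$-dimensional FOC system to a $2\times 2$ linear system. In this network, firms $k$ and $l$ both have degree $n-2$, so $\eta_k(\mathbf{G}^C_{-kl}) = \eta_l(\mathbf{G}^C_{-kl}) = 2/(n+1)$; every firm $i \in \mathcal{N}\setminus\{k,l\}$ has degree $n-1$, so $\eta_i(\mathbf{G}^C_{-kl}) = 1/(n+1)$. First, $k$ and $l$ occupy a symmetric position in the sense of Definition~\ref{definition_symmetry}: each is connected to precisely the firms in $\mathcal{N}\setminus\{k,l\}$. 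Second, any two firms in $\mathcal{N}\setminus\{k,l\}$ are also symmetrically positioned, since each is connected to every other firm. Applying Proposition~\ref{prop:summetric_ratio} to the pair $(k,l)$ with $G_{kl}=0$ yields $\str{e}_l(\mathbf{G}^C_{-kl}) = \lambda \, \str{e}_k(\mathbf{G}^C_{-kl})$, reproducing the $\lambda$ defined in the statement. Applying it to an arbitrary pair in $\mathcal{N}\setminus\{k,l\}$ (with $G_{ii'}=1$) yields $\str{e}_i(\mathbf{G}^C_{-kl}) = \theta_i E$ for a common scalar $E$ to be determined.

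The next step is to derive the first linear equation in $(\str{e}_k, E)$ by writing the FOC \eqref{eq:foc_effort_desc} for an arbitrary rest firm $i$ and applying the ``add-and-subtract $\theta_i \str{e}_i$'' trick used in the proof of Lemma~\ref{lemma:FC_complete}. Because $(n+1)\eta_k = (n+1)\eta_l = 2$ while $(n+1)\eta_j = 1$ for all other $j$, the manipulation produces
\begin{equation*}
	(n+1)^2 \phi \, \str{e}_i = \theta_i \Bigl[(\alpha - \bar c) + 2\theta_k \str{e}_k + 2\theta_l \str{e}_l + \textstyle\sum_{j \in \mathcal{N}\setminus\{k,l\}} \theta_j \str{e}_j\Bigr].
\end{equation*}
Substituting $\str{e}_j = \theta_j E$, $\str{e}_l = \lambda \str{e}_k$, and $\sum_{j \in \mathcal{N}\setminus\{k,l\}}\theta_j^2 = (n+1)^2 \phi \, Q$ collapses this to
\begin{equation*}
	(n+1)^2 \phi (1-Q) \, E = (\alpha - \bar c) + 2(\theta_k + \lambda \theta_l) \, \str{e}_k \quad \text{(A)}.
\end{equation*}

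I would then write the FOC for firm $k$, using $\mathcal{N}_k = \mathcal{N}\setminus\{k,l\}$, $\mathcal{N}_{-k}=\{l\}$, and $1-\eta_l = (n-1)/(n+1)$. Multiplying through by $(n+1)^2/(2\theta_k)$ to expose $b_k$, substituting $\str{e}_l = \lambda \str{e}_k$, and using $\sum_{j \in \mathcal{N}\setminus\{k,l\}} \theta_j \str{e}_j = R \, E$ with $R = (n+1)^2 \phi \, Q$ reduces the FOC to
\begin{equation*}
	\bigl[b_k + (n-1)\lambda \theta_l\bigr] \, \str{e}_k = (\alpha - \bar c) + R \, E \quad \text{(B)}.
\end{equation*}
Solving the linear system (A)--(B) for $\str{e}_k$ yields \eqref{eq:efforts_k}, then $\str{e}_l = \lambda \str{e}_k$ yields \eqref{eq:efforts_l}, and substituting $\str{e}_k$ back into (A) to obtain $E$ gives $\str{e}_i = \theta_i E$, producing \eqref{eq:efforts_i}.

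The conceptual work ends here; the main obstacle is purely algebraic, namely matching the compact forms displayed in the statement. Two identities need to be verified: (i) $(n-1)(1-Q) - 2Q = n(1-Q) - (1+Q)$, which is needed to rewrite the coefficient of $\lambda \theta_l$ in the denominator of \eqref{eq:efforts_k}; and (ii) the factorization $D + 2(\theta_k + \lambda \theta_l) = (1-Q)\,[b_k + 2\theta_k + (n+1) \lambda \theta_l]$, where $D$ denotes that denominator, which is required to recover the clean numerator in \eqref{eq:efforts_i}. Both are elementary rearrangements that fall out once the system has been solved.
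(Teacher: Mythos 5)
Your proposal is correct and follows essentially the same route as the paper: both exploit the degree structure of $\mathbf{G}^C_{-kl}$ to reduce the FOC system to proportionality relations $\str{e}_l = \lambda\str{e}_k$ and $\str{e}_i = \theta_i E$ and then solve a small linear system for $\str{e}_k$, and your two closing identities do verify as claimed. The only difference is cosmetic: you obtain the proportionality relations by invoking Proposition~\ref{prop:summetric_ratio}, whereas the paper re-derives them by equating the right-hand sides of the relevant FOCs, which is the same computation.
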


\begin{proof}
	Without loss of generality, we set $k = 1$ and $l = 2$ to consider the network $\mathbf{G}^{C}_{-12}$. In such case, we have $\eta_1(\mathbf{G}^{C}_{-12}) = \eta_2(\mathbf{G}^{C}_{-12}) = 2/(n+1)$ and $\eta_i(\mathbf{G}^{C}_{-12}) = 1/(n+1)$ for $i \ge 3$. We can then rewrite the system of first-order conditions \eqref{eq:foc_effort_desc} as
	\begin{align}
		b_1 \str{e}_1(\mathbf{G}^{C}_{-12}) &= (\alpha - \bar{c}) + \sum_{j=3}^{n} \theta_j \str{e}_j(\mathbf{G}^{C}_{-12}) - (n-1)\theta_2 \str{e}_2(\mathbf{G}^{C}_{-12}), \label{eq:1} \\
		b_2 \str{e}_2(\mathbf{G}^{C}_{-12}) &= (\alpha - \bar{c}) + \sum_{j=3}^{n} \theta_j \str{e}_j(\mathbf{G}^{C}_{-12}) - (n-1)\theta_1 \str{e}_1(\mathbf{G}^{C}_{-12}), \label{eq:2} \\
		b_i \str{e}_i(\mathbf{G}^{C}_{-12}) & = (\alpha - \bar{c}) + \sum_{j=3}^{n} \theta_j \str{e}_j(\mathbf{G}^{C}_{-12}) + 2\theta_1 \str{e}_1(\mathbf{G}^{C}_{-12}) + 2\theta_2 \str{e}_2(\mathbf{G}^{C}_{-12}) \label{eq:3}, \quad \text{for } i \ge 3
	\end{align}
	where we define the coefficients $b = (b_1, \ldots, b_n)$ as
	\begin{align*}
		b_1 = \frac{(n+1)^2 \phi}{2\theta_1} - 2\theta_1, \quad
		b_2 = \frac{(n+1)^2 \phi}{2\theta_2} - 2\theta_2, \quad \text{and} \quad
		b_i = \frac{(n+1)^2 \phi}{\theta_i}, \quad \text{for } i \ge 3.
	\end{align*}
	Equating the right-hand sides of \eqref{eq:1} to \eqref{eq:2}, we can obtain
	\begin{align}
		\str{e}_2(\mathbf{G}^{C}_{-12}) = \frac{b_1 - (n-1)\theta_1}{b_2 - (n-1)\theta_2} \str{e}_1(\mathbf{G}^{C}_{-12}) \implies
		\str{e}_2(\mathbf{G}^{C}_{-12}) = \lambda \str{e}_1(\mathbf{G}^{C}_{-12}), \label{eq:e2}
	\end{align}
	where we define $\lambda \coloneqq [b_1 - (n-1)\theta_1]/[b_2 - (n-1)\theta_2] = (\theta_2/\theta_1) \cdot [(n+1)\phi - 2\theta_1^2]/[(n+1)\phi - 2\theta_2^2]$ to represent this ratio. Now, equate the right-hand sides of \eqref{eq:1} to \eqref{eq:3} and follow similar steps to obtain
	\begin{equation}
		\str{e}_i(\mathbf{G}^{C}_{-12}) = \frac{b_1 + 2\theta_1 + (n+1) \theta_2 \lambda}{b_i} \str{e}_1(\mathbf{G}^{C}_{-12}), \quad \text{for } i \ge 3. \label{eq:e_i}
	\end{equation}
	Define $T$ as the sum over effective efforts for unaffected firms, such that $T \coloneqq \sum_{j=3}^{n} \theta_j \str{e}_j(\mathbf{G}^{C}_{-12})$. From \eqref{eq:e_i} we can obtain
	\begin{align}
		T &= [b_1 + 2\theta_1 + (n+1) \theta_2 \lambda] \cdot Q \cdot \str{e}_1(\mathbf{G}^{C}_{-12}). \label{eq:T_e1}
	\end{align}
	where we define $Q \coloneqq \sum_{j=3}^{n} \theta_j/b_j = \sum_{j=3}^{n} \theta_j^2/[(n+1)^2 \phi]$. Replace \eqref{eq:e2} and \eqref{eq:T_e1} back into \eqref{eq:1} to obtain an expression for the effort of firm 1:
	\begin{align}
		b_1 \str{e}_1(\mathbf{G}^{C}_{-12}) &= (\alpha - \bar{c}) + T - (n-1)\theta_2 \lambda \str{e}_1(\mathbf{G}^{C}_{-12}), \notag \\
		\str{e}_1(\mathbf{G}^{C}_{-12}) &= \frac{\alpha - \bar{c}}{b_1(1 - Q) - 2\theta_1 Q + \theta_2 \lambda[n(1-Q)-(1+Q)]} \label{eq:efforts_1}.
	\end{align}
	Plugging \eqref{eq:efforts_1} into \eqref{eq:e2} and \eqref{eq:e_i}, we can obtain the efforts for all remaining firms: $\str{e}_2(\mathbf{G}^{C}_{-12})$ as in \eqref{eq:efforts_l} and $\str{e}_i(\mathbf{G}^{C}_{-12})$ for $i \ge 3$ as in \eqref{eq:efforts_i}. With some algebra, these solutions can be expressed in terms of fundamental model quantities as (for $i \ge 3$)
	\begin{align*}
		\str{e}_1(\mathbf{G}^{C}_{-12}) &= \frac{2 (\alpha - \bar{c}) \theta_1 [(n + 1)\phi - 2 \theta_2^2]}{(n + 1)\phi[(n + 1)^2\phi(1 - Q) - 4 (\theta_1^2 + \theta_2^2)] + 4 \theta_1^2\theta_2^2 [(n + 1) Q - n + 3]}, \\
		\str{e}_2(\mathbf{G}^{C}_{-12}) &= \frac{2 (\alpha - \bar{c}) \theta_2 [(n + 1)\phi - 2 \theta_1^2]}{(n + 1)\phi[(n + 1)^2\phi(1 - Q) - 4 (\theta_1^2 + \theta_2^2)] + 4 \theta_1^2\theta_2^2 [(n + 1) Q - n + 3]}, \\
		\str{e}_i(\mathbf{G}^{C}_{-12}) &= \frac{(\alpha - \bar{c}) \theta_i [(n + 1)^2 \phi^2 - 4 (\theta_1^2 + \theta_2^2)]}{(n + 1)^2\phi^2[(n + 1)^2\phi(1 - Q) - 4 (\theta_1^2 + \theta_2^2)] + 4(n + 1)\phi\theta_1^2\theta_2^2[(n + 1) Q - n + 3]}.
	\end{align*}
\end{proof}

\vspace{-5mm}

\begin{lemma}[Link deviation in the complete network]
    \label{lemma:fc_deviation_threshold}
    For any pair of firms $(i, j) \in \mathcal{N}$ in a complete network $\mathbf{G}^C$, there exists a threshold over firm $j$'s productivity $\theta_j$, denoted by $\str{\theta}_{ij}(\theta_1, \ldots, \theta_{j-1}, \theta_{j+1}, \ldots, \theta_n ) \in (0, \theta_i)$, such that:
    \begin{equation}
        \pi^*_{i}\big(\mathbf{G}^C_{-ij}\big) > \pi^*_{i}\big(\mathbf{G}^C\big) \iff \theta_j < \str{\theta}_{ij}(\theta_1, \ldots, \theta_{j-1}, \theta_{j+1}, \ldots, \theta_n ).
    \end{equation}
\end{lemma}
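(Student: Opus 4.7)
The plan is to express firm $i$'s profit change upon severing the link to $j$ as a one-variable function of $\theta_j$ and apply the intermediate value theorem. Using the closed-form efforts from Lemmas~\ref{lemma:FC_complete} and~\ref{lemma:FC_deviate} (specialized to $k = i$, $l = j$) together with \eqref{eq:eq_profit_effort}, I would define
\[
h(\theta_j) \coloneqq \str{\pi}_i(\mathbf{G}^C) - \str{\pi}_i(\mathbf{G}^C_{-ij}),
\]
holding $\{\theta_k\}_{k \ne j}$ and all other parameters fixed. The bound $\phi > \underline{\phi}$ from Proposition~\ref{prop:equilibrium_existance} keeps the denominators of the effort formulas bounded away from zero, so $h$ is a continuous rational function on $[0, \theta_i]$, and the claimed threshold $\str{\theta}_{ij}$ is precisely its sign-change point in this interval.

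The key steps are: (i) Check the boundary signs. At $\theta_j = 0$, firm $j$ provides no R\&D spillover to $i$'s marginal cost, yet maintaining the link pushes $i$'s sparsity coefficient from $2/(n+1)$ down to $1/(n+1)$, which strictly erodes the effectiveness of $i$'s own effort in the complete network; this gives $h(0) < 0$. At $\theta_j = \theta_i$, the Lemma~\ref{lemma:FC_deviate} formulas simplify substantially ($\lambda = 1$, $b_i = b_j$, and the cross-terms in the denominator collapse), and I would directly verify algebraically that $h(\theta_i) > 0$, so firm $i$ strictly prefers keeping a link with an equally productive partner. (ii) Establish that $h$ is strictly increasing in $\theta_j$, with the intuition that a more productive collaborator delivers a larger R\&D spillover to $i$, making the link monotonically more valuable. (iii) Combine continuity, monotonicity, and the endpoint signs to obtain a unique root $\str{\theta}_{ij} \in (0, \theta_i)$ with the stated iff property.

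The main obstacle is step (ii). Since $\theta_j$ enters the equilibrium effort system \eqref{eq:optimal_effort_matrix} both directly through $j$'s row and indirectly through every spillover involving $j$, a brute-force derivative of $h$ mixes a direct spillover channel with indirect fixed-point responses, and the sign is not immediate. I would first attempt an algebraic monotonicity proof using the explicit expressions in Lemma~\ref{lemma:FC_deviate}, exploiting that the sparsity coefficients and the quantity $Q$ are both independent of $\theta_j$, so that only $\lambda$ and the $\theta_j$-dependent prefactors need to be tracked. A natural backup, if direct monotonicity proves unwieldy, is to clear denominators and show that the numerator of $h$ is a low-degree polynomial in $\theta_j$ admitting a single sign change on $(0, \theta_i)$, using Descartes' rule of signs together with the endpoint signs from step (i) to pin down the unique threshold.
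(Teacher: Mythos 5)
Your strategy is essentially the paper's: both arguments use the closed forms from Lemmas~\ref{lemma:FC_complete} and~\ref{lemma:FC_deviate}, check the two endpoints $\theta_j \to 0$ and $\theta_j \to \theta_i$, establish monotonicity in $\theta_j$, and invoke continuity plus the intermediate value theorem; the paper merely works with the ratio $R \coloneqq \str{\pi}_i(\mathbf{G}^C_{-ij})/\str{\pi}_i(\mathbf{G}^C)$ rather than your difference $h$, and your endpoint intuitions (the sparsity coefficient jumping from $1/(n+1)$ to $2/(n+1)$ upon severing) are exactly what the paper's $g(4)/g(1)$ factorization formalizes. The one substantive point is that your step (ii), which you flag as the main obstacle, is precisely where the paper does the work: it differentiates $R$ with respect to $\theta_j$ directly, writes $\partial N/\partial\theta_j \cdot D - N\cdot \partial D/\partial\theta_j$ for the numerator and denominator of the squared-efforts term, and signs this expression as strictly negative using the bound $\phi > \underline{\phi}$ and the fact that $Q$ is independent of $\theta_j$ --- so your ``first attempt'' (direct algebraic monotonicity exploiting that $Q$ and the sparsity coefficients do not depend on $\theta_j$) is the route that succeeds, and the Descartes'-rule backup is unnecessary. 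To count as a complete proof you would need to carry out that derivative computation; as written, the crux is a plan rather than an argument.
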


\vspace{-8mm}

\begin{proof}[\textbf{Proof of Lemma \ref{lemma:fc_deviation_threshold}}]
    Without loss of generality, consider again the situation where firm 1 is in consideration for sustaining or breaking a link with firm 2. Using the equilibrium profit expression \eqref{eq:eq_profit_effort} in combination with Lemmas \ref{lemma:FC_complete} and \ref{lemma:FC_deviate}, the profit for the firm 1 under networks $\mathbf{G}^C$ and $\mathbf{G}_{-12}^C$ is given by
    \begin{align*}
        \str{\pi}_1(\mathbf{G}^C) & = \left[\frac{\phi}{\theta_1^2} (n+1)^2 - 1\right] \phi \left[\frac{(\alpha - \bar{c})\theta_1}{(n+1)^2 \phi - \sum_{j=1}^{n} \theta_j^2}\right]^2, \\
        \str{\pi}_1(\mathbf{G}^C_{-12}) & = \left[\frac{\phi}{\theta_1^2}\left(\frac{n+1}{2}\right)^2 - 1\right] \phi \left\{\frac{\alpha - \bar{c}}{b_1(1 - Q) - 2\theta_1 Q + \theta_2 \lambda[n(1-Q)-(1+Q)]} \right\}^2.
    \end{align*}
	Define $R \coloneqq \str{\pi}_1(\mathbf{G}^C_{-12}) / \str{\pi}_1(\mathbf{G}^C)$ as ratio of profits for firm 1 between of the network with a deviation to the complete network. Using the definitions of $b_1$ and $Q$, we can then express
    \begin{equation*}
        R = \frac{(n+1)^2 \phi - 4\theta_1^2}{(n+1)^2 \phi - \theta_1^2} \left\{\frac
        {(n+1)^2 \phi(1 - Q) - \theta_1^2 - \theta_2^2}
        {(n+1)^2 \phi(1 - Q) - 4\theta_1^2 + 2\theta_1\theta_2 \lambda[(n+1)(1-Q)-2]}
        \right\}^2.
    \end{equation*}
    First, we show that $\lim_{\theta_2 \to 0} R > 1$, proving that there exists a profitable deviation from complete when $\theta_2 \approx 0$ regardless of the remaining productivity distribution $\{\theta_j\}_{j=3}^{n}$. With $\theta_2$ approaching 0, the ratio simplifies to
    \begin{align*}
    	\lim_{\theta_2 \to 0} R & = \frac{(n+1)^2 \phi - 4\theta_1^2}{(n+1)^2 \phi - \theta_1^2} \left\{\frac
    	{(n+1)^2 \phi(1 - Q) - \theta_1^2}
    	{(n+1)^2 \phi(1 - Q) - 4\theta_1^2}
    	\right\}^2, \\
    	&= \underbrace{\frac{(n+1)^2\phi - 4\theta_1^2}{(n+1)^2\phi(1 - Q) - 4\theta_1^2} \cdot \frac{(n+1)^2\phi(1-Q) - \theta_1^2}{(n+1)^2\phi - \theta_1^2}}_{=g(4)/g(1)} \cdot \underbrace{\frac{(n+1)^2\phi(1-Q) - \theta_1^2}{(n+1)^2\phi(1-Q) - 4\theta_1^2}}_{>1},
   	\end{align*}
   	where we define $g(x):= [(n+1)^2\phi - x \cdot \theta_1^2]/[(n+1)^2\phi(1-Q) - x \cdot \theta_1^2]$. We next show that $g(x)$ is a strictly increasing function, which makes the product of the first two terms $g(4)/g(1) > 1$, guaranteeing $\lim_{\theta_2 \to 0} R > 1$. Taking the derivative of $g$ with respect to $x$, and using $n \ge 3$, $\phi \ge 1$, $x \in [1, 4]$, and $Q > 0$, we have \begin{align*}
   		\frac{d g(x)}{d g} = \frac{\theta_1^2(n+1)^2\phi Q}{[(n+1)^2 \phi - x]^2} > 0.
   	\end{align*}
   	Second, we will show $\lim_{\theta_2 \to \theta_1} R < 1$, such that that the complete network is pairwise stable when $\theta_2 \approx \theta_1$ regardless of the remaining productivity distribution $\{\theta_j\}_{j=3}^{n}$. With $\theta_2$ approaching $\theta_1$, $\lambda$ approaches 1 and the ratio $R$ simplifies to
   	\begin{align*}
   		\lim_{\theta_2 \to \theta_1} R & = \underbrace{\frac{(n+1)^2 \phi - 4\theta_1^2}{(n+1)^2 \phi - \theta_1^2}}_{< 1} \Bigg\{\underbrace{\frac
   		{(n+1)^2 \phi(1 - Q) - 2\theta_1^2}
   		{(n+1)^2 \phi(1 - Q) - 2\theta_1^2 + 2\theta_1^2[(n+1)(1-Q)-3]}}_{\coloneqq \rho}
   		\Bigg\}^2.
   	\end{align*}
    We are left with showing $\rho < 1$. As $\theta_i \le 1$ for all $i \in \mathcal{N}$, $Q \le (n-2)/[(n+1)^2\phi]$, which implies $(n+1)^2 \phi (1 - Q) \ge (n+1)^2 \phi - n + 2 \ge 2 \ge 2 \theta_1^2$. It is then sufficient to show $(n+1)(1-Q) > 3$ to guarantee $\rho < 1$. Note that we have
    \begin{align*}
    	(n+1)(1-Q) \ge n+1-\frac{n-2}{(n+1)\phi} > n \ge 3.
    \end{align*}
    Finally, note that the bounds introduced on fundamental model quantities ensure the denominator of $R$ is bounded away from 0. As a rational function of $\theta_2$, $R$ is therefore continuous on $\theta_2$, with $\lim_{\theta_2 \to 0} R > 1$ and $\lim_{\theta_2 \to \theta_1} R < 1$. The derivative of $R$ can be signed using the same bounds as in previous inequalities:
    \begin{align*}
    	\frac{\partial R}{\partial \theta_2} &= 2 \cdot \frac{(n+1)^2 \phi - 4\theta_1^2}{(n+1)^2 \phi - \theta_1^2} \cdot \frac
    	{N}{D^3} \left( \frac{\partial N}{\partial \theta_2} \cdot D - N \cdot \frac{\partial D}{\partial \theta_2} \right), \quad \text{with} \\
        \frac{\partial N}{\partial \theta_2} \cdot D - N \cdot \frac{\partial D}{\partial \theta_2} &= -2 \left\{ \theta_2 D + N [(n+1)(1-Q) - 2] \frac{2(n+1) \phi \lambda}{(n+1)\phi - 2\theta_2^2} \right\} < 0,
    \end{align*}
    where $N \coloneqq (n+1)^2 \phi(1 - Q) - \theta_1^2 - \theta_2^2$, and $D \coloneqq (n+1)^2 \phi(1 - Q) - 4\theta_1^2 + 2\theta_1\theta_2 \lambda[(n+1)(1-Q)-2]$ are the numerator and denominator, respectively, in the relative efforts squared term of $R$. Monotonicity and the intermediate value theorem guarantee the existence of a unique threshold $0 < \str{\theta}_{12} < 1$ where $R = 1$, at which the most productive firm 1 is indifferent between keeping or removing the R\&D collaboration to firm 2. That is, the complete network is not pairwise stable for all $\theta_2 < \str{\theta}_{12}$, while 1 wants to maintain the link to 2 if $\theta_2 \ge \str{\theta}_{12}$.
\end{proof}

\section{Additional Simulation Results}
\label{apx:additional_results}

\begin{figure}[htbp]
    \centering
    \caption{Profit impact of a productivity upgrade under a fixed two-clique network}
    \label{fig:transiting_profit}
    \includegraphics[width=0.8\linewidth]{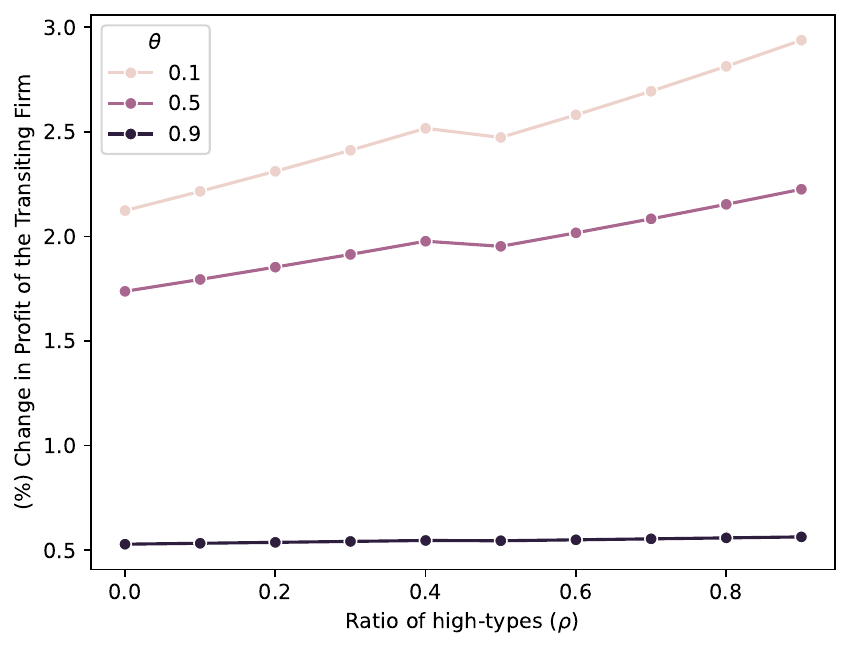}
    \fignote{This figure isolates the effect of changing the productivity composition while holding the network fixed. We consider $n=10$ firms partitioned into two fully connected components (two cliques of size $5$), mirroring the connectivity pattern of the Positive Assortative (PA) network at $\rho=\tfrac12$, and we set $\phi = \underline{\phi}$. Starting from $\rho=0$ (all firms have low productivity $\theta$), we increase $\rho$ in increments of $0.1$ by upgrading one firm at a time from $\theta$ to $1$ (high productivity). At each step, we record the change in profit for the \emph{transitioning} firm (the firm whose productivity is upgraded in that step), holding all other firms' productivities and the network links fixed. The x-axis reports $\rho$ (the share of high-productivity firms), and the y-axis reports the corresponding profit change of the transitioning firm. Curves are shown for three values of the low productivity parameter, $\theta\in\{0.1,0.5,0.9\}$. Two patterns emerge. First, the profit gain from upgrading is larger when $\theta$ is smaller, consistent with a larger productivity jump from $\theta$ to $1$. Second, the profit gain is generally increasing in $\rho$, except around $\rho=\tfrac12$, where the fixed network aligns with exact type clustering (all high types concentrated within one clique and all low types within the other); upgrading a low-type firm within the low-type clique yields a smaller marginal benefit than upgrading the remaining low-type firm in the high-type clique. In all cases, the profit change remains positive. Finally, note that this fixed-network exercise is not, in general, a stability analysis: the imposed two-clique network need not be pairwise stable for the corresponding productivity composition, except at $\rho=\tfrac12$ (and sufficiently low $\theta$).}
\end{figure}

\begin{figure}[htbp]
    \centering
    \caption{Stability Regions of PA and FC Networks in Large $n$ Settings}
    \label{fig:stable_PA_complete_n}
    \includegraphics[width=\textwidth]{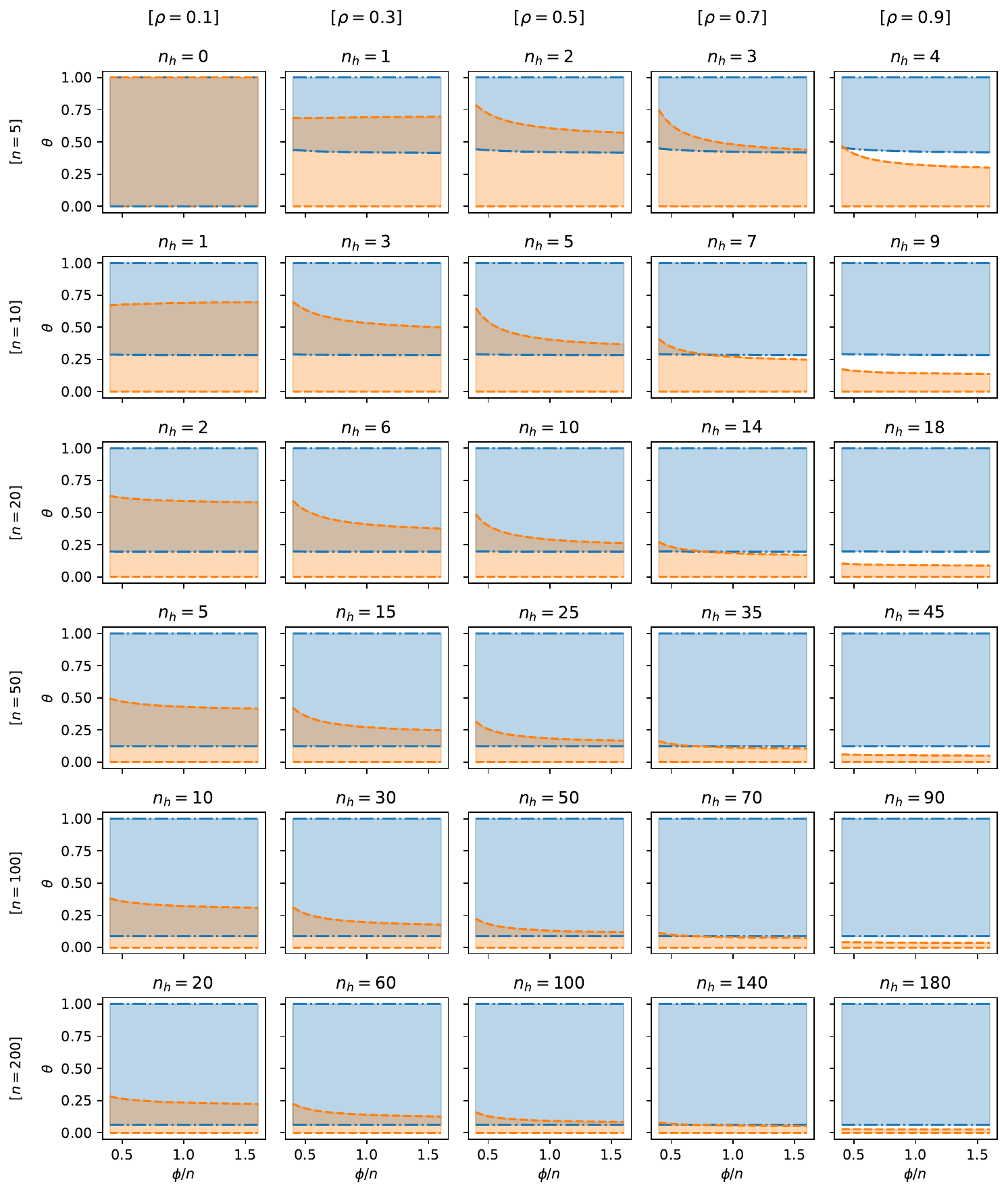}
    \fignote{Each row corresponds to a different number of firms, ranging from $n=5$ (top row) to $n=200$ (bottom row). Each column corresponds to a different share of high-productivity firms, $\rho$, increasing from $10\%$ (left) to $90\%$ (right). Orange regions indicate parameter combinations for which the PA network is stable, while blue regions indicate stability of the FC network.
    In light of the sufficient condition for the existence of an equilibrium effort profile (Proposition~\ref{prop:equilibrium_existance}), the cost coefficient $\phi$ on the horizontal axis is normalized by the number of firms. Specifically, since the lower bound satisfies $\underline{\phi} = \mathcal{O}(n)$, the horizontal axis is expressed in terms of $\phi/n$. As a result, within each row the horizontal axis spans the same range, while comparisons across rows account for differences in $n$ through this normalization.
    }
\end{figure}

\end{document}